\theoremstyle{definition}
\newtheorem{lemma}{Lemma}
\newtheorem{proposition}{Proposition}
\newtheorem{corollary}{Corollary}
\theoremstyle{remark}
\begin{document}

\title{On the Energy-Efficiency Trade-off Between \\ Active and Passive Communications with \\ RIS-based Symbiotic Radio}

\author{Sihan~Wang, Jingran~Xu, and Yong~Zeng,~\IEEEmembership{Senior Member,~IEEE,} 
\thanks{This work was supported by the National Key R$\text{\&}$D Program of China with Grant number 2019YFB1803400.
Part of this work has been presented at the 2022 IEEE WCSP, Nanjing, China, 01-03 Nov. 2022\cite{10039158}.}
\thanks{ The authors are with the National Mobile Communications Research Laboratory, 
Southeast University, Nanjing 210096, China. Y. Zeng is also with the Purple Mountain Laboratories, Nanjing 211111, China (Email: \{turquoise, jingran\_xu, yong\_zeng\}@seu.edu.cn). \emph{(Corresponding author: Yong Zeng.)}}}

\maketitle
\begin{abstract}
  Symbiotic radio (SR) is a promising technology of spectrum- and energy-efficient wireless systems, for which the key idea is to use cognitive backscattering communication to achieve mutualistic spectrum and energy sharing with passive backscatter devices (BDs). In this paper, a reconfigurable intelligent surface (RIS) based SR system is considered, where the RIS is used not only to assist the primary active communication, but also for passive communication to transmit its own information. For the considered system, we investigate the EE trade-off between active and passive communications, by characterizing the EE region. To gain some insights, we first derive the maximum achievable individual EEs of the primary transmitter (PT) and RIS, respectively, and then analyze the asymptotic performance by exploiting the channel hardening effect. To characterize the non-trivial EE trade-off, we formulate an optimization problem to find the Pareto boundary of the EE region by jointly optimizing the transmit beamforming, power allocation and the passive beamforming of RIS. The formulated problem is non-convex, and an efficient algorithm is proposed by decomposing it into a series of subproblems by using alternating optimization (AO) and successive convex approximation (SCA) techniques. Finally, simulation results are presented to validate the effectiveness of the proposed algorithm.
\end{abstract}

\begin{IEEEkeywords}
Symbiotic radio (SR), reconfigurable intelligent surface (RIS), active and passive communication, energy efficiency (EE) region.
\end{IEEEkeywords}

%
\IEEEpeerreviewmaketitle

\section{Introduction}\label{s1}

The sixth generation (6G) mobile communication networks are expected to support ultra-broadband transmission, ultra-massive access, ubiquitous sensing, and reliable intelligent connectivity\cite{dlr133477,You2020Towards6W,8820755}. However, with the dramatic increase of connected communication devices, there are two serious challenges: the shortage of spectrum resources and the sustainability of energy supply. To resolve such issues, it is imperative to develop innovative technologies to simultaneously improve the spectrum efficiency (SE) and energy efficiency (EE) to fully realize the vision of Internet of Everything (IoE).

One promising solution to address the above challenge is symbiotic radio (SR) communication\cite{8907447}, which combines the advantages and effectively avoids the deficiencies of both cognitive radio (CR)\cite{5783948} and cooperative ambient backscatter communication (AmBC)\cite{8692391}. The key idea of SR is to leverage cognitive backscattering communication to achieve mutualistic spectrum and energy sharing by integrating passive backscatter device (BD) with active primary transmitter (PT). Specifically, the BD modulates its own information by passively backscattering the incident signal from the PT without active signal processing. As such, apart from enhancing SE as in the conventional CR system, SR exploits the passive backscattering technology to greatly reduce the power consumption as in the AmBC system, which is expected to improve EE significantly\cite{9193946}.
In general, SR can be classified into two categories, parasitic SR (PSR) and commensal SR (CSR), according to the relationships between the symbol periods for the BD and the PT\cite{8907447}. In PSR, the signals of both BD and PT have equal symbol durations, so that the backscattering transmission and the primary transmission interfere with each other. By contrast, in CSR, the symbol duration of the passive BD signal spans multiple PT symbol durations, which may contribute additional multipath signal components to enhance the active primary transmission\cite{8907447}. SR communication is expected to find a wide range of applications, such as E-health, wearable devices, environmental monitoring, vehicle-to-everything (V2E), and smart city\cite{2021arXiv211108948B}.

Significant research efforts have been recently devoted to the study of SR communications, e.g., in terms of theoretical analysis\cite{8638762} and performance optimization to maximize the achievable rate\cite{8665892}, channel capacity\cite{9600844} and EE\cite{9036977}. However, a practical challenge of SR technique is that due to the double-hop signal attenuations, the backscattering link is typically much weaker than the direct primary link\cite{9193946}. Thus, the performance of the secondary passive communication and its additional multipath contribution to the primary communication link are quite limited.
To address such issues, various techniques have been proposed to enhance the backscattering links, such as massive BDs enabled SR\cite{9686018,Xu2022MIMOSR,9461158} or active-load assisted SR\cite{9154299} communications. In particular, multiple-input multiple-output (MIMO) SR communication system with massive number of BDs is studied in \cite{Xu2022MIMOSR}, where closed-form expressions of the asymptotic regime are derived to reveal the relationship between the primary and secondary communication rates. Besides, a precoding optimization problem is solved to maximize the primary communication rate while guaranteeing the minimum secondary communication rate.

On the other hand, reconfigurable intelligent surface (RIS), also termed as intelligent reflecting surface (IRS)\cite{9122596}, has emerged as another promising solution to strengthen the backscattering link. RIS is composed of a large number of passive reflecting elements, which is able to configure the wireless environment in a desirable manner by adjusting the reflection coefficients without relying on active radio frequency (RF) chain components\cite{9140329,9086766}. This prominent property renders RIS rather appealing to enhance the performance of various wireless communication systems, such as CR\cite{9235486}, MIMO\cite{9110912}, unmanned aerial vehicle\cite{9351782} and non-orthogonal multiple access (NOMA) systems\cite{9133094}.
Unlike such works on RIS-assisted communications, for RIS-based SR systems, the RIS is used not only as a helper to assist the primary active communication, but also as a BD to enable secondary passive communication to transmit its own information\cite{9537929}. To distinguish it with the conventional counterpart, we term such kind of RIS as RIS-BD in this paper.

Performance optimization for RIS-BD based SR systems has been studied for different purposes, such as power minimization\cite{9481926,9345739,9839116} and channel capacity maximization\cite{9530367}.
Specifically, in \cite{9481926}, an algorithm based on generalized power method (GPM) technique is proposed to minimize the transmit power in a RIS empowered SR over broadcasting
signals. A RIS-assisted MIMO symbiotic communications adopting multiple reflecting patterns is investigated in \cite{9530367} to maximize the capacity. 
Furthermore, novel schemes that incorporate RIS-based SR for symbiotic active/passive communications have also attracted increasing interest. In \cite{9982407}, an optimization framework for the symbiotic operation of a multiuser CR network consisting of a NOMA-based primary network and a RIS-based secondary network is developed. The authors of \cite{9951145} propose RIS-aided number modulation, where the number of RIS elements are divided into the in-phase and quadrature subsets to transmit the RIS's information.

However, it is worth noting that there are only very limited works on EE study for RIS-BD based SR communication systems. A RIS-assisted SR communication network with multiple primary users (PUs) and multiple clusters of IoT devices linked with a RIS is considered in \cite{9817403}. The authors maximize the EE by using alternating optimization (AO) together with semi-definite relaxation (SDR) and Dinkelbach's algorithm. In \cite{10013073}, the authors propose a method based on the accelerated generalized Benders decomposition (GBD) algorithm to maximize the EE of the secondary receiver under a required signal-to-interference-plus-noise ratio (SINR) constraint for the primary receiver (PR). 
It is worth remarking that such existing EE studies of RIS-BD based SR systems mainly focus on the so-called global EE, defined as the ratio of the weighted sum-rate of primary and backscattering communications to the total power consumption of passive and active devices. However, considering global EE may lead to the overlook of the EE of RIS-BD since 
both the communication rate and power consumption of active communication are typically orders of magnitude higher than that of the passive communication.
Different from global EE, studying the individual EEs of passive RIS-BD and active PT may reveal the fundamental relationship of EEs between active and passive communications. Therefore, in this paper, we investigate a RIS-BD based multiple-input single-output (MISO) SR communication system. In order to develop an insightful analysis of the EE trade-off between active primary and passive backscattering communications, EE maximization problem is formulated to characterize the EE region. Our main contributions are summarized as follows:
\begin{itemize}
  \item First, we present the system model of RIS-BD based MISO SR communication systems, and then derive the maximum individual EEs of the active primary communication and the passive backscattering communications. We show that achieving these two maximum EE values require significantly different transmission strategies, which implies that there exist a nontrivial trade-off between these two EEs.
  \item Next, to exploit the channel hardening effect for RIS-BD based SR systems, we provide the asymptotic analysis by assuming that the number of PT antennas or RIS-BD elements goes very large. We analyze the distribution characteristics of these two EEs in general Rician channel. Besides, closed-form expressions are derived for the EEs of active and passive communications under some special channel assumptions to get some insights.
  \item Furthermore, to study the fundamental EE trade-off between active and passive communications, we formulate an optimization problem to characterize the Pareto boundary of the EE region. The formulated problem is challenging to be solved, since the variables are coupled and both the objective and the constraints are nonconvex. We propose an effective algorithm termed as sample-average based bisection approach with AO and successive convex approximation (SCA) techniques to transform it into a series of low-complexity convex subproblem. Simulation results are provided to validate our theoretical analysis.
\end{itemize}

The rest of this paper is organized as follows. Section \ref{s2} presents the system model of RIS-BD based MISO SR communication. Section \ref{s3} derives the maximum individual EEs for the PT and the RIS-BD, respectively. In Section \ref{s4}, asymptotic performance analysis is provided. In Section \ref{s5}, the Pareto boundary of the EE region is characterized. Simulation results are presented in Section \ref{s6}. Finally, Section \ref{s7} concludes this paper.

Notations:
Lower- and uppercase letters $x$ and $X$ denote a scalar (or constant) and random variable, respectively. Boldface lower- and uppercase letters $\mathbf{x}$ and $\mathbf{X}$ denote vector and matrix, respectively. Notations $x^*$ and $|x|$ denote the conjugate and the absolute value of a scalar, respectively. The L1 norm and L2 norm (also called Euclidean norm) of a vector $\mathbf{x}$ are denoted respectively as $\|\mathbf{x}\|_1$ and $\|\mathbf{x}\|$. For a matrix $\mathbf{X}$, denote its conjugate, transpose, and conjugate transpose as ${\mathbf{X}}^* $, $\mathbf{X}^\mathrm{T}$, and $\mathbf{X}^\mathrm{H}$, respectively. $\mathbf{I}_M$ denotes an $M \times M$ identity matrix.
${\mathbb{C}^{M \times N}}$ denotes the space of ${M \times N}$ matrices with complex entries. $\mathcal{CN} (\mu,\Sigma)$ denotes the circularly symmetric complex Gaussian (CSCG) distribution with mean $\mu$ and variance $\Sigma$.
${\text{diag}}\left( \mathbf{x} \right)$ denotes a diagonal matrix whose diagonal elements are given by vector $\mathbf{x}$. ${\mathbb{E}_X}\left[ \cdot \right]$, ${\mathbb{E}[X]}$, $\mathrm{Var} [X]$ and $\mathrm{Cov}[X,Y]$ denote the statistical expectation with respect to $X$, the expectation and variance of $X$, and the covariance between $X$ and $Y$, respectively. Furthermore, notations $\mathrm{W}\left(\cdot \right)$, $\arg \left( \cdot \right)$ and $\mathrm{L}_q (\cdot)$ denote the Lambert-W function, the phase of any complex number and the Laguerre polynomial of order $q$, respectively.

\section{System Model}\label{s2}
\vspace*{-.5\baselineskip}
\begin{figure}[H]
  \centering
  \includegraphics[width = .55\textwidth]{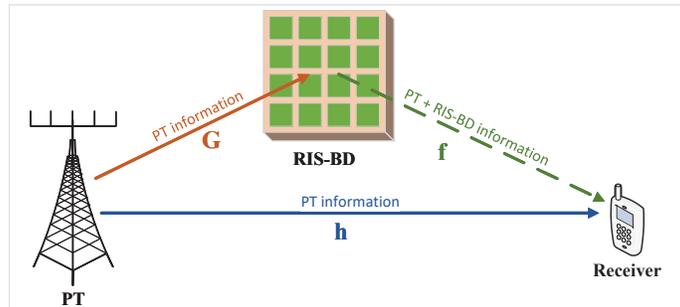}
  \vspace*{-.5\baselineskip}
  \caption{A RIS-BD based SR communication system.}
  \vspace*{-.5\baselineskip}
  \label{fig_1}
\end{figure}
Fig. \ref{fig_1} shows a RIS-BD based MISO SR communication system, which includes an active PT with $M$ antennas, a passive RIS-BD with $N$ reflecting elements and a single-antenna receiver. Both the PT and RIS-BD wish to communicate with the receiver. The PT actively transmits its information-bearing signal to the receiver via multi-antenna beamforming. Meanwhile, the RIS-BD not only assists the primary transmission but also modulates its own information over the incident signal from the PT by cognitive backscattering communication technology. Thus, the RIS-BD reuses not only the spectrum, but also the power of the PT to transmit its own information.

Denote the MISO channel of the direct PT-to-receiver link as ${{\mathbf{h}}} = \left[{{h_1},\cdots,{h_M}} \right] ^\mathrm{T} \in {\mathbb{C}^{M \times 1}}$, where $h_{m}$ is the channel coefficient between the $m$-th antenna of the PT and the receiver. Further, denote the channel matrix between the PT and RIS-BD as ${\mathbf{G}} = \left[ {{g_{nm}}} \right] \in {\mathbb{C}^{N \times M}}$, where ${{g_{nm}}}$ is the channel coefficient between the $m$-th antenna of the PT and the $n$-th reflecting element of the RIS-BD. Moreover, the MISO channel of the RIS-BD-to-receiver link is denoted as ${{\mathbf{f}}} = \left[ {{f_1},{f_2},\cdots,{f_N}} \right] ^ {\mathrm{T}} \in {\mathbb{C}^{N \times 1}}$, where ${{f_{n}}}$ is the channel coefficient between the $n$-th reflecting element of the RIS-BD and the receiver.

Since the RIS-BD operates as a low-power passive device, its own communication rate is much lower than that of the PT\cite{9347977}. Therefore, we consider CSR setup, i.e., $T_c = L {T_s}, L \gg 1$, where $T_c$ and $T_s$ is the symbol duration of the RIS-BD and the PT, respectively.
Let $s\left( l \right)$ be the independent and identically distributed (i.i.d.) information-bearing symbol following the standard CSCG distribution, i.e., $s\left( l \right) \sim \mathcal{C}\mathcal{N}\left( {0,1} \right)$. Further, denote $c \sim \mathcal{CN} (0,1)$ as the information-bearing symbol of the RIS-BD to be transmitted in one RIS-BD symbol period, which spans $L$ symbol periods of the primary signal $s\left(l\right)$, for $l = 1,\cdots, L$. Further, denote the active transmit beamforming vector of the PT as ${\bf{w}}\in{{\mathbb C}^{M \times 1}}$, which satisfies ${\left\| {\mathbf{w}} \right\|^2} \le P_{\max}$, with $P_{\max}$ denoting the maximum allowable transmit power of the PT.

Denote by ${\boldsymbol{\theta}}=\left[{\theta _1},\cdots,{\theta _N}\right]^\mathrm{T} \in {\mathbb{C}^{N \times 1}}$ the phase shift vector of the RIS-BD, where ${\theta _n}\in [0, 2\pi)$ is the phase shift of the $n$-th reflecting element, for $n = 1, ..., N$. Let ${\boldsymbol{\phi }} = \left[ {{e^{j{\theta _1}}},{e^{j{\theta _2}}},\cdots,{e^{j{\theta _N}}}} \right]^{\mathrm{H}} \in {\mathbb{C}^{N \times 1 }}$ denote the reflection coefficient vector.
Therefore, the reflected signal from the RIS-BD can be expressed as $\sqrt \rho  {{\mathbf{f}}^{\mathrm{H}}}{\mathbf{\Phi Gw}}s\left( l \right)c$, where ${\mathbf{\Phi }} = {\text{diag}}\left( {\boldsymbol{\phi }}^{\mathrm{H}} \right) \in {\mathbb{C}^{N \times N}}$ is the reflection coefficient matrix of the RIS-BD, and $0 < \rho \le 1$ denotes the reflection efficiency.

The received signals at the receiver for each backscattering symbol period can be written as
\begin{equation}\label{eq18}
  y\left( l \right) = \; {{\mathbf{h}}^{\mathrm{H}}}{\mathbf{w}}s\left( l \right) + \sqrt \rho  {{\mathbf{f}}^{\mathrm{H}}}{\mathbf{\Phi Gw}}s\left( l \right)c + z\left( l \right) = \big( {{{\mathbf{h}}^{\mathrm{H}}} + \sqrt \rho  {{\mathbf{f}}^{\mathrm{H}}}{\mathbf{\Phi G}}c} \big){\mathbf{w}}s\left( l \right) + z\left( l \right), 
\end{equation}
where $l = 1,\cdots, L$ and $z\left(l\right)\sim\mathcal{CN}\left({0,{\sigma^2}}\right)$ is the additive white Gaussian noise (AWGN) with zero mean and power ${\sigma^2}$.

The second term in (\ref{eq18}) can be viewed as the output of the primary signal $s\left(l\right)$ passing through a channel $\sqrt \rho  {{\mathbf{f}}^{\mathrm{H}}}{\mathbf{\Phi G}}c$ that varies depending on the passive backscattering signal $c$. Based on the received signal (\ref{eq18}), the receiver first decodes the active primary signal $s\left(l\right)$ by treating the passive signal $c$ as a multi-path component, and the equivalent channel for decoding $s\left(l\right)$ is denoted by ${\mathbf{h}}_{\mathrm{eq}}^{\mathrm{H}}(c) = {{\mathbf{h}}^{\mathrm{H}}} + \sqrt \rho  {{\mathbf{f}}^{\mathrm{H}}}{\mathbf{\Phi G}}c$.
Therefore, the signal-to-noise ratio (SNR) for decoding $s\left(l\right)$ at the RIS-BD with a given $c$ is
\begin{equation}\label{eq19}
  {\gamma _s}\left( c \right) = \frac{| {\big( {\mathbf{h}}_{\mathrm{eq}}^{\mathrm{H}}(c) \big){\mathbf{w}}} |^2}{{{\sigma ^2}}} = \frac{| {\big( {{{\mathbf{h}}^{\mathrm{H}}} + \sqrt \rho  {{\mathbf{f}}^{\mathrm{H}}}{\mathbf{\Phi G}}c} \big){\mathbf{w}}} |^2}{{{\sigma ^2}}}.
\end{equation}
From (\ref{eq19}), the expression of ${\gamma _s}\left( c \right)$ is related to the passively modulated signal of the RIS-BD $c$, which changes relatively fast compared to the channel variation\cite{9391685}. Thus, by taking expectation over the random passive signal $c$, the average achievable rate of the primary transmission is
\begin{equation}\label{eq20}
  {R_s} = B{\mathbb{E}_c}\left[ {{{\log }_2}\left( {1 + {\gamma_s}\left( c \right)} \right)}\right] = B{\mathbb{E}_c}\Big[ {{{\log }_2}\big( {1 + \frac{| {( {{{\mathbf{h}}^{\mathrm{H}}} + \sqrt \rho  {{\mathbf{f}}^{\mathrm{H}}}{\mathbf{\Phi G}}c} ){\mathbf{w}}} |^2}{{{\sigma ^2}}}} \big)} \Big].
\end{equation}
After decoding $s\left(l\right)$, the primary signal can be subtracted from (\ref{eq18}) before decoding the passive signal $c$. Specifically, for each RIS-BD symbol period, we denote the primary signal, the noise and the intermediate signal after removing the primary signal in vector by ${\mathbf{s}} = {[s(1),s(2),\cdots,s(L)]^\mathrm{T}}$, ${\mathbf{z}} = {[z(1),z(2),\cdots,z(L)]^\mathrm{T}}$ and $\hat {\mathbf{y}} = {[y(1),y(2),\cdots,y(L)]^\mathrm{T}}$. Then it follows from (\ref{eq18}) that
\begin{equation}\label{eq1}
  \hat {\mathbf{y}} = \sqrt \rho  {{\mathbf{f}}^{\mathrm{H}}}{\mathbf{\Phi Gws}}c + {\mathbf{z}}.
\end{equation}
With ${\mathbf{s}}$ decoded at the receiver, the maximal ratio combining (MRC) can be applied by pre-multiplying $\hat {\mathbf{y}}$ in (\ref{eq1}) by $\frac{1}{{\sqrt L }}{{\mathbf{s}}^{\mathrm{H}}}$. For $L \gg 1$, due to the law of large numbers and the fact that the information-bearing symbols $s(l)$ are i.i.d. random variables with variance 1, we have $\frac{1}{L}{{\mathbf{s}}^{\mathrm{H}}}{\mathbf{s}} \to 1$. Therefore, the resulting signal can be obtained by
\begin{equation}\label{eq34}
  \tilde {{y}} = \frac{1}{{\sqrt L }}{{\mathbf{s}}^{\mathrm{H}}}{\hat {\mathbf{y}}}= \sqrt {L \rho}  {{\mathbf{f}}^{\mathrm{H}}}{\mathbf{\Phi Gw}}c + \tilde {{z}},
\end{equation}
where $\tilde {{z}} = \frac{1}{{\sqrt L }}{{\mathbf{s}}^{\mathrm{H}}}{\mathbf{z}}$ is the resulting noise which can be shown to follow the distribution $\mathcal{CN}( {0,{\sigma ^2}})$.
As a result, the achievable rate of the RIS-BD is
\begin{equation}\label{eq22}
  {R_c} = \frac{B}{L}{\log _2}({1 + \gamma _c} ) = \frac{B}{L}{\log _2}\Big( {1 + \frac{{\rho L{{| {{{\mathbf{f}}^{\mathrm{H}}}{\mathbf{\Phi Gw}}} |}^2}}}{{{\sigma ^2}}}} \Big),
\end{equation}
where the denominator ${L}$ accounts for the fact that in the CSR setup, only one RIS-BD symbol is transmitted during $L$ successive primary symbol periods, and the primary signal can be viewed as a spread-spectrum code with length $L$ for RIS-BD symbols. Therefore, the SNR for decoding RIS-BD symbol ${\gamma _c}$ is increased by $L$ times, at the cost of symbol rate decreased by $1/L$ as (\ref{eq22})\cite{8907447}.

The total power consumption of the PT for the considered RIS-BD based SR system is composed of the power consumed by the power amplifier, which is modelled to be proportional to the signal transmission power ${\left\| {\mathbf{w}} \right\|^2}$, as well as the circuit power consumed in the PT, denoted as $P_{s}$. Therefore, the total power consumption of the PT is ${P_{PT}} \triangleq \mu {\left\| {\mathbf{w}} \right\|^2} + {P_s}$, where $\mu>1$ denotes the inefficiency of the power amplifier at the PT.
On the other hand, the RIS-BD does not consume power for signal transmission, since its reflectors are passive elements that do not actively transmit signal. The power dissipated at the RIS-BD with $N$ reflecting elements is modelled as ${P_{RIS-BD}} \triangleq N{P_r}$, where ${P_r}$ denotes the power consumption of each phase shifter.

We define the EE of the PT as the ratio of the primary communication rate to its power consumption, which can be expressed as
\begin{equation}\label{eq15}
{EE_{PT}} = \frac{R_s}{P_{PT}}
   = \frac{B{\mathbb{E}_c}\Big[ {{{\log }_2}\big( {1 + \frac{| {( {{{\mathbf{h}}^{\mathrm{H}}} + \sqrt \rho  {{\mathbf{f}}^{\mathrm{H}}}{\mathbf{\Phi G}}c} ){\mathbf{w}}} |^2}{{{\sigma ^2}}}} \big)} \Big]}{\mu {{{\| {\mathbf{w}} \|^2}} + P_s}}.
\end{equation}
On the other hand, the EE of the RIS-BD is defined as the ratio of the backscattering communication rate to its power consumption, which can be expressed as
\begin{equation}\label{eq25}
  EE_{RIS-BD} = \frac{B}{{LN{P_r}}}{\log _2}\Big( {1 + \frac{{L{{| {\sqrt \rho  {{\mathbf{f}}^{\mathrm{H}}}{\mathbf{\Phi Gw}}} |}^2}}}{{{\sigma ^2}}}} \Big).
\end{equation}

It is observed from (\ref{eq15})-(\ref{eq25}) that the EEs of both PT and RIS-BD critically depend on the transmit beamforming vector $\bf{w}$ and reflection matrix at the RIS-BD ${\mathbf{\Phi}}$. By varying $\bf{w}$ and ${\mathbf{\Phi}}$, the complete EE region can be obtained, which is defined as the union of all EE pairs as
\begin{equation}\label{eq27}
  {\cal H }= \bigcup\limits_{{{\left\| {\bf{w}} \right\|}^2} \le {P_{\max }},{\mathbf{\Phi}} } {\left\{ {\left( {EE_{PT},EE_{RIS-BD}} \right)} \right\}}.
\end{equation}


\section{Maximum Individual EE}\label{s3}

To explicitly reveal the fundamental relationship between active and passive communications, in this section, we first analyze the maximum individual EE for the PT and the RIS-BD to get some insights before characterizing the EE region.

\subsection{Maximum Individual EE of the PT}\label{s3-1}

If the objective is to maximize the EE of the PT without considering that of the RIS-BD, we have the following optimization problem
\underline{\textbf{P1}:}
\begin{equation}\label{eq40}
\begin{aligned}
  \mathop {\max }\limits_{{\mathbf{w}},{\mathbf{\Phi}}} \;  &
  {EE}_{PT} = \frac{B{\mathbb{E}_c}\left[ {{{\log }_2}\Big( {1 + \frac{| {( {{{\mathbf{h}}^{\mathrm{H}}} + \sqrt \rho  {{\mathbf{f}}^{\mathrm{H}}}{\mathbf{\Phi G}}c} ){\mathbf{w}}} |^2}{{{\sigma ^2}}}} \Big)} \right]}{{\mu {{{\| {\mathbf{w}} \|}^2}} + P_s}}\\
  \text{s.t.}\; & \text{C1}:\; {\left\| {\mathbf{w}} \right\|^2} \leq {P_{\max }}, \\
  \; & \text{C2}:\; \phi_n=e^{j \theta_n}, {\theta _n} \in [0, 2\pi) ,\forall n = 1,\cdots,N.  \\
\end{aligned}
\end{equation}
Note that to maximize the EE of the PT, we have to deal with the expectation of a logarithmic function with respect to the random RIS-BD symbol $c$, which makes it very hard to solve \underline{\textbf{P1}} directly. 

To address such issues, we substitute ${R_s}$ in (\ref{eq20}) with its upper bound and convert the problem into a more tractable problem.
By using Jensen's inequality, ${R_s}$ in (\ref{eq20}) is approximated by its upper bound to the concave logarithmic function\cite{9686018}, i.e.,
\begin{equation}\label{eq41}
  {R_s} \le {R_{s}^{\mathrm{ub}}}
   = B{\log _2}\Big( {1 + \frac{{{\mathbb{E}_c}\left[ {|({{\mathbf{h}}^{\mathrm{H}}} + \sqrt \rho  {{\mathbf{f}}^{\mathrm{H}}}{\mathbf{\Phi G}}c){\mathbf{w}}{|^2}} \right]}}{{{\sigma ^2}}}} \Big)  = B{\log _2}\Big( {1 + \frac{{{{\mathbf{w}}^{\mathrm{H}}}{\mathbb{E}_c}\left[ {{{\mathbf{h}}_{\mathrm{eq}}}( c ){\mathbf{h}}_{\mathrm{eq}}^{\mathrm{H}}( c )} \right]{\mathbf{w}}}}{{{\sigma ^2}}}} \Big).
\end{equation}
Define $\sqrt \rho {{\mathbf{f}}^{\mathrm{H}}} {\boldsymbol{\Phi}{\mathbf{G}}} = \sqrt \rho {{\boldsymbol{\phi}}^{\mathrm{H}}} {{\text{diag}} ({{\mathbf{f}}^{\mathrm{H}}}) {\mathbf{G}}} = {{\boldsymbol{\phi}}^{\mathrm{H}}} {\mathbf{M}}$ for convenience. Due to the fact that $c \sim\mathcal{CN}\left({0,1}\right)$, we can derive the average correlation matrix of ${\mathbf{h}}_{\mathrm{eq}}^{\mathrm{H}}(c)$ as
\begin{equation}\label{eq10}
  {\mathbb{E}_c}[ {{{\mathbf{h}}_{\mathrm{eq}}}(c){\mathbf{h}}_{\mathrm{eq}}^{\mathrm{H}}(c)} ]
  = {\mathbb{E}_c}\left[ {{({{\mathbf{h}}^{\mathrm{H}}} + {{\boldsymbol{\phi}}^{\mathrm{H}}} {\mathbf{M}}c)^{\mathrm{H}}}({{\mathbf{h}}^{\mathrm{H}}} + {{\boldsymbol{\phi}}^{\mathrm{H}}} {\mathbf{M}}c)} \right] = {\mathbf{h}}{{\mathbf{h}}^{\mathrm{H}}} + {{\mathbf{M}}^{\mathrm{H}}}{\boldsymbol{\phi}}{{\boldsymbol{\phi}} ^{\mathrm{H}}}{\mathbf{M}}.
  \end{equation}
Moreover, by defining $\widehat {\bf{h}} = \frac{{\bf{h}}}{\sigma}$ and $\widehat {\mathbf{M}} = \frac{{\mathbf{M}}}{\sigma } = \frac{{\sqrt \rho  {\text{diag}}({{\mathbf{f}}^{\mathrm{H}}}){\mathbf{G}}}}{\sigma }$ as the , ${R_{s}^{\mathrm{ub}}}$ is given by
\begin{equation}\label{eq42}
{R_{s}^{\mathrm{ub}}} = B{\log _2}\Big( {1 + \frac{{{\mathbf{w}}^{\mathrm{H}}}({\mathbf{h}}{{\mathbf{h}}^{\mathrm{H}}} + {{\mathbf{M}}^{\mathrm{H}}}{\boldsymbol{\phi}}{{\boldsymbol{\phi}} ^{\mathrm{H}}}{\mathbf{M}}){\mathbf{w}}}{{{\sigma ^2}}}} \Big) = B {\log _2}\Big( {1 + {|{\widehat{\mathbf{h}}^{\mathrm{H}}}{\mathbf{w}}|^2} + |{{\boldsymbol{\phi}} ^{\mathrm{H}}}\widehat {\mathbf{M}}{\mathbf{w}}|^2} \Big).
\end{equation}
Therefore, by replacing $R_s$ in (\ref{eq40}) with ${R_{s}^{\mathrm{ub}}}$, we can transform the problem as
\underline{\textbf{P1-1}:}
\begin{equation}\label{eq43}
\begin{aligned}
  \mathop {\max }\limits_{{\mathbf{w}},{\boldsymbol{\phi}}} \;  & EE_{PT}^{\mathrm{ub}}= \frac{{B{{\log }_2}\Big( {1 + {|{\widehat{\mathbf{h}}^{\mathrm{H}}}{\mathbf{w}}|^2} + |{{\boldsymbol{\phi}} ^{\mathrm{H}}}\widehat {\mathbf{M}}{\mathbf{w}}|^2} \Big)}}{\mu {{{\| {\mathbf{w}} \|}^2}} + P_s}\\
  \text{s.t.}\; & \text{C1}:\; {\left\| {\mathbf{w}} \right\|^2} \leq {P_{\max }}, \\
  \; & \text{C2}:\; 0\leq {\theta _n} < 2\pi ,\forall n = 1,\cdots,N.  \\
\end{aligned}
\end{equation}
In the following, AO algorithm is proposed to solve \underline{\textbf{P1-1}}, where phase shifts vector and the transmit beamforming vector are updated alternately with the other fixed.

\subsubsection{Transmit Beamforming Optimization of ${\mathbf{w}}$}

We first consider the transmit beamforming vector ${\mathbf{w}}$ optimization problem with given phase shifts vector ${\boldsymbol{\phi}}$. For convenience, we decompose the transmit beamforming vector as ${\bf{w}}=\sqrt p {\bf{v}}$, where $p={\left\|{\bf{w}} \right\|^2}$ is the transmit power and $\bf{v}$ denotes the transmit direction with ${\left\| {\bf{v}} \right\|} = 1$. Then, the subproblem for transmit beamforming vector optimization of \underline{\textbf{P1-1}} reduces to
\underline{\textbf{P1-2}:}
\begin{equation}\label{eq55}
\begin{aligned}
  \mathop {\max }\limits_{{\mathbf{v}},p} \;  & EE_{PT}^{\mathrm{ub}} = \frac{{B{{\log }_2}\Big(1 + p{{\mathbf{v}}^{\mathrm{H}}}(\widehat {\mathbf{h}}{{\widehat {\mathbf{h}}}^{\mathrm{H}}} + {{\widehat {\mathbf{M}}}^{\mathrm{H}}}{\boldsymbol{\phi}} {{\boldsymbol{\phi}}^{\mathrm{H}}}\widehat {\mathbf{M}}){\mathbf{v}}\Big)}}{{\mu p + {P_s}}}\\
  \text{s.t.}\; & \text{C1-1}:\; {\left\| {\mathbf{v}} \right\|} = 1, \\
  \; & \text{C1-2}:\; 0 \leq p \leq {P_{\max }}.  \\
\end{aligned}
\end{equation}
It is not difficult to see that with any given $p$, the optimal beamforming direction ${\bf{v}}^{\star}$ to \underline{\textbf{P1-2}} is obtained by solving the following optimization problem
\underline{\textbf{P1-3}:}
\begin{equation}\label{eq61}
  \mathop {\max}\limits_{{\bf{v}}, {\left\|{\bf{v}}\right\|}=1}\; R\left( {{\mathbf{F}},{\mathbf{v}}} \right)={{{\mathbf{v}}^{\mathrm{H}}}{\mathbf{Fv}}}
\end{equation}
where ${\mathbf{F}}=\widehat {\mathbf{h}}{{\widehat {\mathbf{h}}}^{\mathrm{H}}} + {{\widehat {\mathbf{M}}}^{\mathrm{H}}}{\boldsymbol{\phi}} {{\boldsymbol{\phi}}^{\mathrm{H}}}\widehat {\mathbf{M}}$. \underline{\textbf{P1-3}} is the standard Rayleigh quotient problem, whose optimal value is the largest eigenvalue with respect to the positive semidefinite matrix $\mathbf{F}$, i.e. $R{\left( {{\mathbf{F}},{\mathbf{v}}} \right)_{\max }} = {\lambda _{\max }}$ and the optimal solution ${\bf{v}}^{\star}$ is given by the normalized eigenvector of $\mathbf{F}$ corresponding to ${\lambda _{\max }}$.

By substituting $R{\left( {{\mathbf{F}},{\mathbf{v}}} \right)_{\max }}$ into \underline{\textbf{P1-2}}, the optimization problem reduces to finding the optimal transmit power $p^{\star}$ as
\underline{\textbf{P1-4}:}
\begin{equation}\label{eq45}
  \mathop {\max }\limits_{p,0 \le p \le {P_{\max }}} \; {EE}_{PT}^{\mathrm{ub}}\left(p\right) = \frac{{B{{\log }_2}\left( {1 +{\lambda _{\max }}p} \right)}}{{\mu p + {P_s}}}
\end{equation}
For this problem, the optimal solution is obtained by
\begin{equation}\label{eq46}
  {p^{\star}} = \left[ {\frac{{{\lambda _{\max }}{P_s} - \mu }}{{\mu {\lambda _{\max }}\mathrm{W}(\frac{{({\lambda _{\max }}{P_s} - \mu )}}{{\mu {\text{e}}}})}} - \frac{1}{{{\lambda _{\max }}}}} \right]_0^{{P_{\max }}},
\end{equation}
where $[a]_b^c = \min \left\{ {\max \{ {a,b} \},c} \right\}$\cite{9497709}. Therefore, the optimal solution of \underline{\textbf{P1-2}} can be denoted as ${{\mathbf{w}}^{\star}} = \sqrt {{p^{\star}}} {{\mathbf{v}}^{\star}}$.

\subsubsection{Phase Optimization of ${\boldsymbol{\theta}}$}

Next, with any given transmit beamforming vector ${\mathbf{w}}$, we consider the phase shifts vector ${\boldsymbol{\phi}}$ optimization problem. Note that the objective ${EE}_{PT}^{\mathrm{ub}}$ in \underline{\textbf{P1-1}} is a monotonically increasing function of $|{{\boldsymbol{\phi}} ^{\mathrm{H}}}\widehat {\mathbf{M}}{\mathbf{w}}|$, the subproblem for phase shifts vector optimization of \underline{\textbf{P1-1}} reduces to
\underline{\textbf{P1-5}:}
\begin{equation}\label{eq44}
\begin{aligned}
  \mathop {\max }\limits_{{\boldsymbol{\phi}}} \;  &
  |{{\boldsymbol{\phi}} ^{\mathrm{H}}}\widehat {\mathbf{M}}{\mathbf{w}}| \\
  \text{s.t.} \; & \text{C2}:\; \phi_n=e^{j \theta_n}, {\theta _n} \in [0, 2\pi) ,\forall n = 1,\cdots,N.  \\
\end{aligned}
\end{equation}
By defining ${\mathbf{x}} = {\widehat {\mathbf{M}}{\mathbf{w}}} $, the objective function in \underline{\textbf{P1-5}} achieves its maximum value 
\begin{equation}\label{eq28}
  {\gamma ^{\star}}( {\mathbf{w}} ) = \sum\limits_{n = 1}^N {|{x_n}|} = \sum\limits_{n = 1}^N {|\widehat {\mathbf{m}}_n^{\mathrm{H}}{\mathbf{w}}|}  = {{\| {\mathbf{x}} \|}_1} = {{\| {\widehat {\mathbf{M}}{\mathbf{w}}} \|}_1}
\end{equation}
with optimal solutions
\begin{equation}\label{eq29}
  {\theta _n^{\star}}  = \eta ^{\star} - \arg ({x_n})  = \eta ^{\star} - \arg (\widehat {\mathbf{m}}_n^{\mathrm{H}}{\mathbf{w}}) = \eta ^{\star} - \arg \Big(\frac{{\sqrt \rho  {f_n}^{\mathrm{H}}{\mathbf{g}}_n^{\mathrm{H}}{\mathbf{w}}}}{\sigma }\Big),
\end{equation}
where ${\eta ^{\star} }$ is a constant, ${x_n}$ and ${f_n}$ are the $n$-th elements of ${\mathbf{x}}$ and ${\mathbf{f}}$, $\widehat {\mathbf{m}}_n^{\mathrm{H}}$ and ${{\mathbf{g}}_n^{\mathrm{H}}}$ are the $n$-th row vector of $\widehat {\mathbf{M}}$ and ${\mathbf{G}}$. Then the optimal solution of \underline{\textbf{P1-5}} is ${\boldsymbol{\phi}}^{\star}={\left[ {{e^{j\theta _1^{\star}}},{e^{j\theta _2^{\star}}}, \cdots ,{e^{j\theta _N^{\star}}}} \right]^{\mathrm{H}}}$.

The algorithm is initialized with ${\bf{w}}^{(0)}$ and $\boldsymbol{\phi}^{(0)}$ chosen from feasible sets. In the $i$-th iteration, with given $\boldsymbol{\phi}^{(i-1)}$, we first solve problem \underline{\textbf{P1-2}} to design the optimal transmit beamforming vector ${\bf{w}}^{(i)}$. With ${\bf{w}}^{(i)}$, we solve problem \underline{\textbf{P1-5}} and then obtain the optimal $\boldsymbol{\phi}^{(i)}$ with respect to ${\bf{w}}^{(i)}$ according to (\ref{eq29}). In this way, ${\bf{w}}^{(i)}$ and $\boldsymbol{\phi}^{(i)}$ are optimized alternatively, which is summarized in Algorithm \ref{alg1}. 
\begin{algorithm}[H]
  \renewcommand{\algorithmicrequire}{\textbf{Input}:}
  \renewcommand{\algorithmicensure}{\textbf{Output}:}
  \renewcommand{\algorithmicif}{\quad \textbf{If}}
  \renewcommand{\algorithmicendif}{\quad \textbf{End if}}
  \renewcommand{\algorithmicelse}{\quad \textbf{Else}}
  \renewcommand{\algorithmicreturn}{\textbf{Return}}
  \caption{AO algorithm for solving \underline{\textbf{P1-1}}}
  \label{alg1}
  \begin{algorithmic}[1]
    \STATE \textbf{Initialization}: ${{\bf{w}}^{(0)}} = \sqrt {p^{(0)}} {\bf{v}}^{(0)}$, where $p^{(0)} = P_{\max}$, ${\bf{v}}^{(0)} = \frac{\widehat {\mathbf{h}}}{\|\widehat {\mathbf{h}}\|}$, random phase shifts ${{\boldsymbol{\phi}}^{(0)}}$, ${\mathbf{F}}^{(0)}$ and $i=0$;
    \STATE \textbf{Repeat}
      \STATE \quad Update $i=i+1$;
      \STATE \quad Compute the eigenvalue decomposition of ${\mathbf{F}}^{(i-1)}$, denote $\lambda_{\max}^{(i)}$ the largest eigenvalue of ${\mathbf{F}}^{(i-1)}$ and ${{\bf{v}}^{(i)}}$ the eigenvector of ${\mathbf{F}}^{(i-1)}$ corresponding to $\lambda_{\max}^{(i)}$;
      \STATE \quad Update ${p^{(i)}}$ by (\ref{eq46}) with $\lambda_{\max}^{(i)}$ and ${{\bf{w}}^{(i)}} = \sqrt{p^{(i)}}{\mathbf{v}}^{(i)}$;
      \STATE \quad Update ${\theta _n^{(i)}}$ by (\ref{eq29}) with ${{\bf{w}}^{(i)}}$ and ${\boldsymbol{\phi}}^{(i)}=e^{j {\boldsymbol{\theta}}^{(i)}}$;
      \STATE \quad Update ${\mathbf{F}}^{(i)} = \widehat {\mathbf{h}}{{\widehat {\mathbf{h}}}^{\mathrm{H}}} + {{\widehat {\mathbf{M}}}^{\mathrm{H}}}{\boldsymbol{\phi}}^{(i)} {{\boldsymbol{\phi}}^{{(i)}H}}\widehat {\mathbf{M}}$;
    \STATE \textbf{Until} The fractional increase of the objective value of \underline{\textbf{P1-1}} is below a certain threshold $\kappa$;
    \RETURN ${{\bf{w}}^{\star}}={{\bf{w}}^{\left(i\right)}}$ and ${{\boldsymbol{\phi}}^{\star}}={{\boldsymbol{\phi}}^{(i)}}$.
  \end{algorithmic}
\end{algorithm}
\begin{proposition}\label{Pro1}
  \underline{\textbf{P1-1}} converges when the AO algorithm is used as shown in Algorithm \ref{alg1}.
\end{proposition}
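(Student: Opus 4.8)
The plan is to invoke the standard monotone-convergence argument for block-coordinate (alternating) optimization: I would show that the sequence of objective values $\{EE_{PT}^{\mathrm{ub}}(\mathbf{w}^{(i)},\boldsymbol{\phi}^{(i)})\}$ produced by Algorithm~\ref{alg1} is non-decreasing and bounded above, so that it necessarily converges to a finite limit.

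First I would establish monotonicity by exploiting the fact that each of the two sub-updates within an iteration is solved to global optimality. In the $i$-th iteration, the beamforming step solves \underline{\textbf{P1-2}} exactly for the fixed $\boldsymbol{\phi}^{(i-1)}$, since the optimal direction $\mathbf{v}^{(i)}$ is the dominant eigenvector of the Rayleigh quotient \underline{\textbf{P1-3}} and the optimal power $p^{(i)}$ is the closed-form maximizer (\ref{eq46}) of \underline{\textbf{P1-4}}; hence $EE_{PT}^{\mathrm{ub}}(\mathbf{w}^{(i)},\boldsymbol{\phi}^{(i-1)}) \ge EE_{PT}^{\mathrm{ub}}(\mathbf{w}^{(i-1)},\boldsymbol{\phi}^{(i-1)})$. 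Likewise, the phase step solves \underline{\textbf{P1-5}} exactly for the fixed $\mathbf{w}^{(i)}$ through the closed-form solution (\ref{eq29}), and because $EE_{PT}^{\mathrm{ub}}$ is monotonically increasing in $|\boldsymbol{\phi}^{\mathrm{H}}\widehat{\mathbf{M}}\mathbf{w}|$, this yields $EE_{PT}^{\mathrm{ub}}(\mathbf{w}^{(i)},\boldsymbol{\phi}^{(i)}) \ge EE_{PT}^{\mathrm{ub}}(\mathbf{w}^{(i)},\boldsymbol{\phi}^{(i-1)})$. Chaining these two inequalities gives $EE_{PT}^{\mathrm{ub}}(\mathbf{w}^{(i)},\boldsymbol{\phi}^{(i)}) \ge EE_{PT}^{\mathrm{ub}}(\mathbf{w}^{(i-1)},\boldsymbol{\phi}^{(i-1)})$ for every $i$, so the objective never decreases across full iterations.

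Next I would argue that the objective is uniformly bounded above. Under constraint C1 we have $\|\mathbf{w}\|^2 \le P_{\max}$, and since the channels $\widehat{\mathbf{h}}$ and $\widehat{\mathbf{M}}$ are fixed, the numerator $B\log_2(1 + |\widehat{\mathbf{h}}^{\mathrm{H}}\mathbf{w}|^2 + |\boldsymbol{\phi}^{\mathrm{H}}\widehat{\mathbf{M}}\mathbf{w}|^2)$ is finite and upper bounded by a constant depending only on $P_{\max}$, $\widehat{\mathbf{h}}$ and $\widehat{\mathbf{M}}$, while the denominator satisfies $\mu\|\mathbf{w}\|^2 + P_s \ge P_s > 0$. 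Hence $EE_{PT}^{\mathrm{ub}}$ stays below a finite constant throughout the iterations. Combining the non-decreasing and bounded-above properties, the monotone convergence theorem guarantees that the objective sequence converges, which establishes the claim.

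This argument is essentially routine, so I do not anticipate a serious obstacle; the only point requiring care is to state the conclusion at the level of the objective value, which is exactly what the proposition asserts. Establishing convergence of the iterates $(\mathbf{w}^{(i)},\boldsymbol{\phi}^{(i)})$ themselves, or convergence of the limit point to a stationary solution of \underline{\textbf{P1-1}}, would require additional regularity assumptions and lies beyond the present statement.
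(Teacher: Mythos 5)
Your proposal is correct and follows essentially the same route as the paper's proof: both establish that each block update (transmit beamforming via the Rayleigh-quotient eigenvector and closed-form power, then the closed-form phase update) is globally optimal for its subproblem, so the objective sequence of \underline{\textbf{P1-1}} is monotonically non-decreasing across iterations and hence converges. In fact your argument is marginally more complete, since you explicitly verify the upper bound on $EE_{PT}^{\mathrm{ub}}$ (via $\|\mathbf{w}\|^2 \le P_{\max}$ and $\mu\|\mathbf{w}\|^2 + P_s \ge P_s > 0$) that the paper's appeal to ``converges to a finite limit'' leaves implicit.
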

\begin{proof}
  Please refer to Appendix \ref{App3}.
\end{proof}
Denote the final solution to \underline{\textbf{P1}} ${{\mathbf{w}}_1^{\star}}$ and ${\boldsymbol{\phi}_1 ^{\star}}$.
Therefore, the maximum $EE_{PT}$, denoted by $\eta_{PT}^{(1)}$, and the resulting EE of the RIS-BD, i.e., $\eta _{RIS-BD}^{(1)}$, are obtained as
\begin{equation}\label{eq47}
  \eta _{PT}^{(1)} = \frac{{B{{\log }_2}\Big(1 + |{{\widehat {\mathbf{h}}}^{\mathrm{H}}}{{\mathbf{w}}_1^{\star}}|^2 + |{\boldsymbol{\phi}_1 ^{\star \mathrm{H}}}\widehat {\mathbf{M}}{{\mathbf{w}}_1^{\star}}|^2\Big)}}{{\mu {{\| {{{\mathbf{w}}_1^{\star}}} \|}^2} + {P_s}}}, \;
  \eta _{RIS-BD}^{(1)} = \frac{B}{{LN{P_r}}}{\log _2}\Big( 1 + L|{{\boldsymbol{\phi}}_1 ^{\star \mathrm{H}}\widehat {\mathbf{M}}{\mathbf{w}_1^{\star}}|^2} \Big).
\end{equation}

\subsection{Maximum Individual EE of the RIS-BD}\label{s3-2}
Next, we consider the problem to maximize the EE of the RIS-BD, without considering that of the PT. Based on (\ref{eq25}), the problem can be formulated as
\underline{\textbf{P2}:}
\begin{equation}\label{eq52}
\begin{aligned}
  \mathop {\max }\limits_{{\mathbf{w}},{\boldsymbol{\Phi }}}
  \; & EE_{RIS-BD} = \frac{B}{{LN{P_r}}}{\log _2}\Big( {1 + \frac{{\rho L{{| {  {{\mathbf{f}}^{\mathrm{H}}}{\mathbf{\Phi Gw}}} |}^2}}}{{{\sigma ^2}}}} \Big) \\
  \text{s.t.}\; & \text{C1}:\; {\left\| {\mathbf{w}} \right\|^2} \leq {P_{\max }}, \\
  \; & \text{C2}:\; \phi_n=e^{j \theta_n}, {\theta _n} \in [0, 2\pi) ,\forall n = 1,\cdots,N.
\end{aligned}
\end{equation}
Since logarithmic function is monotonically increasing, P2 is equivalent to
\underline{\textbf{P2-1}:}
\begin{equation}\label{eq53}
\begin{aligned}
  \mathop {\max }\limits_{{\mathbf{w}},{\boldsymbol{\phi}}} \;  & |{{\boldsymbol{\phi}} ^{\mathrm{H}}}\widehat {\mathbf{M}}{\mathbf{w}}|  \\
  \text{s.t.}\; & \text{C1}:\; {\left\| {\mathbf{w}} \right\|^2} \leq {P_{\max }}, \\
  \; & \text{C2}:\; \phi_n=e^{j \theta_n}, {\theta _n} \in [0, 2\pi) ,\forall n = 1,\cdots,N.
\end{aligned}
\end{equation}
Note that the variables ${\mathbf{w}}$ and ${\boldsymbol{\phi}}$ in \underline{\textbf{P2-1}} are coupled with each other, which is difficult to jointly optimized. One approach to solve it is that under the optimal solution of ${\boldsymbol{\phi}}^{\star}$ with respect to ${\mathbf{w}}$ in (\ref{eq29}), we can substitute the objective function with ${\gamma ^{\star}}( {\mathbf{w}} )$ in (\ref{eq28}) to transform \underline{\textbf{P2-1}} into an optimization problem only related to ${\mathbf{w}}$. Since it aims to maximize a convex function, we still need to use some techniques to transform it into a convex problem, in which way we can only obtain the suboptimal solution and the computational complexity is relatively high. Therefore, similar as \underline{\textbf{P1-1}} analyzed in Subsection \ref{s3-1}, we apply the AO algorithm to solve \underline{\textbf{P2-1}}.

Initialize ${\bf{w}}^{(0)}$ and $\boldsymbol{\phi}^{(0)}$ feasible for \underline{\textbf{P2-1}}. In the $i$-th iteration, with given $\boldsymbol{\phi}^{(i-1)}$, to design the optimal transmit beamforming vector ${\bf{w}}^{(i)}$, we first solve the following problem \underline{\textbf{P2-2}:} 
\begin{equation}\label{eq30}
  \mathop {\max }\limits_{{\bf{w}},{\|{\bf{w}}\|^2}\le P_{\max} }
  \; |{{\boldsymbol{\phi}} ^{\mathrm{H}}}\widehat {\mathbf{M}}{\mathbf{w}}|
\end{equation}
It can be verified that the maximum-ratio transmission (MRT) is the optimal transmit beamforming solution to \underline{\textbf{P2-2}}, i.e., ${{\mathbf{w}}^{(i)}} = \sqrt {{P_{\max }}} \frac{{{{\widehat {\mathbf{M}}}^{\mathrm{H}}}\boldsymbol{\phi}^{(i-1)} }}{{\left\| {{{\widehat {\mathbf{M}}}^{\mathrm{H}}}\boldsymbol{\phi}^{(i-1)} } \right\|}}  \triangleq  {{\mathbf{w}}_{MRT}^{(i)}}$. With ${\bf{w}}^{(i)}$, we then solve problem \underline{\textbf{P1-5}} and then obtain the optimal $\boldsymbol{\phi}^{(i)}$ with respect to ${\bf{w}}^{(i)}$ according to (\ref{eq29}).
In this way, $\boldsymbol{\phi}^{(i)}$ and ${\bf{w}}^{(i)}$ are optimized alternately until the convergence.

Denote the final solution as ${{\mathbf{w}}_2^{\star}}$ and ${\boldsymbol{\phi}_2 ^{\star}}$.
Therefore, the maximum $EE_{RIS-BD}$, denoted by $\eta _{RIS-BD}^{(2)}$, and the resulting EE of the PT, i.e., $\eta_{PT}^{(2)}$, are obtained as
\begin{equation}\label{eq2}
  \eta _{RIS-BD}^{(2)} = \frac{B}{{LN{P_r}}}{\log _2}\Big( 1 + L|{{\boldsymbol{\phi}}_2 ^{\star \mathrm{H}}\widehat {\mathbf{M}}{\mathbf{w}_2^{\star}}|^2} \Big), \;
  \eta _{PT}^{(2)} = \frac{{B{{\log }_2}\Big(1 + |{{\widehat {\mathbf{h}}}^{\mathrm{H}}}{{\mathbf{w}}_2^{\star}}|^2 + |{\boldsymbol{\phi}_2 ^{\star \mathrm{H}}}\widehat {\mathbf{M}}{{\mathbf{w}}_2^{\star}}|^2\Big)}}{{\mu {{\| {{{\mathbf{w}}_2^{\star}}} \|}^2} + {P_s}}}.
\end{equation}

Based on the above results, it is found that in each iteration, the scheme of optimizing phase vector $\boldsymbol{\phi}$ to maximize the EE of the PT and that of the RIS-BD is the same. However, solutions to these two individual EE maximization problems are different in terms of the power allocation $p$ and the active normalized beamforming vector $\mathbf{v}$.
Specifically, we take a single iteration for example.
To maximize the EE of the RIS-BD, the PT should use its maximum power $P_{\max}$ and direct the signal towards the RIS-BD via MRT over the equivalent cascade channel ${\boldsymbol{\phi}^\mathrm{H}}\widehat {\mathbf{M}}$ through RIS-BD.
By contrast, to maximize the EE of the PT, the optimal normalized transmit beamforming $\mathbf{v}$ is given by the dominant eigendirection of the combined channel consisting of the primary link ${\widehat {\mathbf{h}}}$ as well as the backscattering link ${\boldsymbol{\phi}^\mathrm{H}}\widehat {\mathbf{M}}$. Besides, the PT should not use the maximum power generally.
Such results demonstrate that there exists a nontrivial trade-off between maximizing ${EE}_{PT}$ and ${EE}_{RIS-BD}$. To investigate such a trade-off, we will characterize the EE region of the considered RIS-BD based SR system as defined in (\ref{eq27}). But before we do that, we will analyze the asymptotic performance of this SR system to get more insights.

\section{Asymptotic Performance Analysis}\label{s4}

In this section, we analyze the asymptotic performance of SR when the number of PT antennas $M$ or RIS-BD elements $N$ goes very large, to exploit the channel hardening effect for SR systems.

By using the Rician channel model, the PT-to-receiver link can be expressed as
\begin{equation}\label{eq31}
    \mathbf{h} = \sqrt{\beta_{T R}} \left(\sqrt{\frac{K_1}{K_1+1}} \mathbf{h}_{LoS} + \sqrt{\frac{1}{K_1+1}} \mathbf{h}_{NLoS}\right),
\end{equation}
which is composed of a deterministic LoS path and spatially uncorrelated NLoS path.
For a typical RIS-BD deployment, the PT-to-RIS-BD link $\mathbf{G} $ and the RIS-BD-to-receiver link $\mathbf{f}$ can be modelled by Rician fading composed of a deterministic line-of-sight (LoS) path and spatially correlated non-LoS (NLoS) path:
\begin{equation}\label{eq32}
  \begin{aligned}
    \mathbf{G} & = \sqrt{\beta_{T S}} \left(\sqrt{\frac{K_2}{K_2+1}} \mathbf{G}_{LoS}+\sqrt{\frac{1}{K_2+1}} \mathbf{R}_{T S}^{1 / 2} \mathbf{G}_{NLoS}\right), \\
    \mathbf{f} & = \sqrt{\beta_{S R}} \left(\sqrt{\frac{K_3}{K_3+1}} \mathbf{f}_{LoS}+\sqrt{\frac{1}{K_3+1}} \mathbf{R}_{S R}^{1 / 2} \mathbf{f}_{NLoS}\right),
  \end{aligned}
\end{equation}
where $\beta_{T S}$ and $\beta_{S R}$ are the large-scale path losses of the PT-to-RIS-BD link and the RIS-BD-to-receiver link, $K_2$ and $K_3$ are the Rician K-factors between the PT and RIS-BD and between RIS-BD and receiver, and $\mathbf{R}_{T S} \in$ $\mathbb{C}^{N \times N}$ and $\mathbf{R}_{S R} \in \mathbb{C}^{N \times N}$ are their spatial correlation matrices. Also, $\mathbf{G}_{LoS} \in \mathbb{C}^{N \times M}$ and $\mathbf{f}_{LoS} \in \mathbb{C}^{N \times 1}$ are the deterministic LoS components and $\mathbf{G}_{NLoS} \in \mathbb{C}^{N \times M}$ and $\mathbf{f}_{NLoS} \in \mathbb{C}^{N \times 1}$ are the NLoS components whose entries are i.i.d. complex Gaussian random variables with zero mean and unit variance.
Under the above models, we study the asymptotic behavior of the maximum EE of the PT and the RIS-BD, respectively.

\subsection{Asymptotic Analysis of Maximum EE of the PT}\label{s4-1}

As analyzed in Subsection \ref{s3-1}, we can formulate an optimization problem to maximize the upper bound of the individual EE of the PT in (\ref{eq41}) with respect to ${{\mathbf{w}},{\boldsymbol{\Phi }}}$. By decomposing the transmit beamforming vector as ${\bf{w}}=\sqrt p {\bf{v}}$, we can rewrite the upper bound EE of the PT as
\begin{equation}\label{eq36}
  EE_{PT}^{\mathrm{ub}}= \frac{{B{{\log }_2}\Big(1 + \frac{p}{{{\sigma ^2}}}{{\mathbf{v}}^{\mathrm{H}}}\mathbf{D}{\mathbf{v}}\Big)}}{{\mu p + {P_s}}},
\end{equation}
where $\mathbf{D}={\mathbf{h}}{{\mathbf{h}}^{\mathrm{H}}} + \rho {{\mathbf{G}}^{\mathrm{H}}}{{\mathbf{\Phi }}^{\mathrm{H}}}{\mathbf{f}}{{\mathbf{f}}^{\mathrm{H}}}{\mathbf{\Phi G}}$. Then, we have
\begin{equation}\label{eq37}
  \mathbf{D} = {\mathbf{h}}{{\mathbf{h}}^{\mathrm{H}}} + \rho \sum\nolimits_{n = 1}^N {\sum\nolimits_{n' = 1}^N {{f_n}f_{n'}^* {e^{j\left( {{\theta _{n'}} - {\theta _n}} \right)}}{\mathbf{g}}_n^* {\mathbf{g}}_{n'}^\mathrm{T}} },
\end{equation}
where ${{\mathbf{g}}_{n}^\mathrm{T}}$ is the $n$-th row vector of $\mathbf{G}$.

To analyze the asymptotic performance of maximal EE of the PT, we consider the extreme case when the PT has massive antennas, i.e., $M \gg 1 $. Moreover, we assume that all channels are i.i.d. Rayleigh fading channels as well as the RIS-BD has massive elements, i.e., $K_1 = K_2 = K_3 = 0$, ${\mathbf{R}}_{SR} = {\mathbf{R}}_{TS} = {{\mathbf{I}}_N}$, and $N \gg 1$ to obtain some insights.

\begin{lemma}\label{Lem1}
  For RIS-BD-based SR with massive reflecting elements under i.i.d. Rayleigh fading channels, the EE of the PT in (\ref{eq41}) approaches to
  \begin{equation}\label{eq59}
    EE_{PT}^{\mathrm{ub}} \to\frac{{B{{\log }_2}\Big(1 + \frac{p}{{{\sigma ^2}}}M \left( {{\beta _{TR}} + \rho N{\beta _{SR}}{\beta _{TS}}} \right)\Big)}}{{\mu p + {P_s}}}.
  \end{equation}
\end{lemma}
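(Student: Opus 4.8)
The plan is to evaluate the optimal value of $\mathbf{v}^{\mathrm{H}}\mathbf{D}\mathbf{v}$ in (\ref{eq36}) in closed form and then let $M,N\to\infty$. By the Rayleigh-quotient argument of \underline{\textbf{P1-3}}, the optimal normalized beamformer attains $\mathbf{v}^{\mathrm{H}}\mathbf{D}\mathbf{v}=\lambda_{\max}(\mathbf{D})$, so it suffices to track the largest eigenvalue of $\mathbf{D}=\mathbf{h}\mathbf{h}^{\mathrm{H}}+\rho\mathbf{G}^{\mathrm{H}}\mathbf{\Phi}^{\mathrm{H}}\mathbf{f}\mathbf{f}^{\mathrm{H}}\mathbf{\Phi}\mathbf{G}$. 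Writing $\mathbf{b}\triangleq\sqrt{\rho}\,\mathbf{G}^{\mathrm{H}}\mathbf{\Phi}^{\mathrm{H}}\mathbf{f}$, the matrix $\mathbf{D}=\mathbf{h}\mathbf{h}^{\mathrm{H}}+\mathbf{b}\mathbf{b}^{\mathrm{H}}$ is a sum of two rank-one positive semidefinite terms, whose largest eigenvalue admits the closed form
\[
\lambda_{\max}(\mathbf{D})=\tfrac{1}{2}\Big(\|\mathbf{h}\|^2+\|\mathbf{b}\|^2+\sqrt{(\|\mathbf{h}\|^2-\|\mathbf{b}\|^2)^2+4|\mathbf{h}^{\mathrm{H}}\mathbf{b}|^2}\Big).
\]
The three scalar ingredients $\|\mathbf{h}\|^2$, $\|\mathbf{b}\|^2$, and $|\mathbf{h}^{\mathrm{H}}\mathbf{b}|^2$ are then evaluated in the large-array limit.

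Second, I would apply the law of large numbers (channel hardening) to each ingredient under the i.i.d. Rayleigh assumption $K_1=K_2=K_3=0$, $\mathbf{R}_{TS}=\mathbf{R}_{SR}=\mathbf{I}_N$. For the direct link, $\|\mathbf{h}\|^2=\beta_{TR}\|\mathbf{h}_{NLoS}\|^2\to M\beta_{TR}$ as $M\to\infty$. For the cascade link, the key hardening step is $\tfrac{1}{M}\mathbf{G}\mathbf{G}^{\mathrm{H}}\to\beta_{TS}\mathbf{I}_N$, since the diagonal entries concentrate at $M\beta_{TS}$ while the inner products of independent rows are of lower order. Because $\mathbf{\Phi}\mathbf{\Phi}^{\mathrm{H}}=\mathbf{I}_N$ for unit-modulus coefficients, this gives $\|\mathbf{b}\|^2=\rho\,\mathbf{f}^{\mathrm{H}}\mathbf{\Phi}(\mathbf{G}\mathbf{G}^{\mathrm{H}})\mathbf{\Phi}^{\mathrm{H}}\mathbf{f}\to\rho M\beta_{TS}\|\mathbf{f}\|^2$, and a second application of the law of large numbers over the $N$ i.i.d. entries of $\mathbf{f}$ yields $\|\mathbf{f}\|^2\to N\beta_{SR}$, hence $\|\mathbf{b}\|^2\to\rho MN\beta_{TS}\beta_{SR}$.

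Third, I would substitute these limits into the eigenvalue formula. The decisive point is the cross term: the optimal phase vector $\boldsymbol{\phi}$ is free to rotate the cascade direction $\mathbf{b}$ into alignment with the direct channel $\mathbf{h}$, which (with $N$ unit-modulus phases available to match an $M$-dimensional direction, feasible for $N\gg M$) drives $|\mathbf{h}^{\mathrm{H}}\mathbf{b}|^2\to\|\mathbf{h}\|^2\|\mathbf{b}\|^2$ while leaving $\|\mathbf{b}\|^2$ unchanged, since $\tfrac{1}{M}\mathbf{G}\mathbf{G}^{\mathrm{H}}\to\beta_{TS}\mathbf{I}_N$ makes the norm phase-invariant. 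In this coherent regime the discriminant collapses to $(\|\mathbf{h}\|^2+\|\mathbf{b}\|^2)^2$, so that $\lambda_{\max}(\mathbf{D})\to\|\mathbf{h}\|^2+\|\mathbf{b}\|^2=M(\beta_{TR}+\rho N\beta_{SR}\beta_{TS})$; note that this limit is exactly the trace of $\mathbf{D}$, and the direct-link term $M\beta_{TR}$ is of lower order than the dominant cascade term $\rho MN\beta_{TS}\beta_{SR}$. Inserting $\mathbf{v}^{\mathrm{H}}\mathbf{D}\mathbf{v}\to M(\beta_{TR}+\rho N\beta_{SR}\beta_{TS})$ into (\ref{eq36}) reproduces (\ref{eq59}).

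The main obstacle is precisely this cross-term step: justifying that the joint phase and beamforming optimization make the direct and cascade contributions combine coherently, so that $\lambda_{\max}(\mathbf{D})$ approaches the trace $\|\mathbf{h}\|^2+\|\mathbf{b}\|^2$ rather than merely the larger summand $\|\mathbf{b}\|^2$. I would handle this by exploiting that $\tfrac{1}{M}\mathbf{G}\mathbf{G}^{\mathrm{H}}\to\beta_{TS}\mathbf{I}_N$ renders $\|\mathbf{b}\|$ independent of $\boldsymbol{\phi}$, so the phases can be devoted entirely to alignment without any norm penalty; the remaining concentration estimates (the vanishing off-diagonal entries of $\mathbf{G}\mathbf{G}^{\mathrm{H}}$ and the limits of $\|\mathbf{h}\|^2$ and $\|\mathbf{f}\|^2$) are routine second-moment computations.
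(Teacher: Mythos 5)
Your reduction to $\lambda_{\max}(\mathbf{D})$ via the exact rank-two eigenvalue formula is a genuinely different route from the paper's. The paper's proof (Appendix~\ref{App4}, eq.~(\ref{eq60})) never forms the $2\times 2$ Gram problem; it asserts instead that each outer product hardens to a scaled identity, $\frac{1}{M}\mathbf{h}\mathbf{h}^{\mathrm{H}}\to\beta_{TR}\mathbf{I}_M$, $\frac{1}{M}\mathbf{g}_n^{*}\mathbf{g}_n^{\mathrm{T}}\to\beta_{TS}\mathbf{I}_M$, $\frac{1}{M}\mathbf{g}_n^{*}\mathbf{g}_{n'}^{\mathrm{T}}\to\mathbf{O}_M$, so that $\mathbf{D}\to M(\beta_{TR}+\rho N\beta_{SR}\beta_{TS})\mathbf{I}_M$ and every unit beamformer and every phase choice attains the same value; the additive formula then drops out with no discussion of any cross term. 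Your decomposition is more honest in that it exposes precisely what this hides: whether $\lambda_{\max}(\mathbf{h}\mathbf{h}^{\mathrm{H}}+\mathbf{b}\mathbf{b}^{\mathrm{H}})$ tends to the \emph{sum} $\|\mathbf{h}\|^2+\|\mathbf{b}\|^2$ or only to $\max(\|\mathbf{h}\|^2,\|\mathbf{b}\|^2)$ is decided entirely by $|\mathbf{h}^{\mathrm{H}}\mathbf{b}|^2$.

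However, your resolution of that cross term contains a genuine gap: the two facts you invoke cannot hold in the same asymptotic regime. (i) Phase-invariance of $\|\mathbf{b}\|^2$ under \emph{adaptively chosen} phases requires operator-norm (not entrywise) control of $\frac{1}{M}\mathbf{G}\mathbf{G}^{\mathrm{H}}-\beta_{TS}\mathbf{I}_N$; the spectral fluctuation of this Wishart matrix is of order $\sqrt{N/M}$, so $\|\mathbf{b}\|^2\approx\rho MN\beta_{TS}\beta_{SR}$ uniformly over $\boldsymbol{\phi}$ only when $N\ll M$. (ii) In that regime full alignment is impossible: the best the $N$ unit-modulus phases can do is $\max_{\boldsymbol{\theta}}|\mathbf{h}^{\mathrm{H}}\mathbf{b}|=\sqrt{\rho}\sum_{n}|f_n|\,|\mathbf{h}^{\mathrm{H}}\mathbf{g}_n^{*}|\approx\sqrt{\rho}\,N\frac{\pi}{4}\sqrt{\beta_{SR}\beta_{TS}}\,\|\mathbf{h}\|$, so that $\max|\mathbf{h}^{\mathrm{H}}\mathbf{b}|^2/(\|\mathbf{h}\|^2\|\mathbf{b}\|^2)\approx \pi^2 N/(16M)\to 0$; the vectors remain asymptotically orthogonal, the discriminant does not collapse, and your formula yields $\lambda_{\max}\to\max(\|\mathbf{h}\|^2,\|\mathbf{b}\|^2)$ rather than the sum. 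Conversely, in the regime $N\gg M$ where alignment is feasible, the aligning phases make the $N$ cascade terms combine coherently and thereby inflate $\|\mathbf{b}\|^2$ itself to $\approx\rho\frac{\pi^2}{16}N^2\beta_{SR}\beta_{TS}$ (the $N^2$ beamforming gain, exactly as in Corollary~\ref{cor2}), contradicting your claim that the norm is left unchanged; the limit your argument would then produce is $M\beta_{TR}+\rho\frac{\pi^2}{16}N^2\beta_{SR}\beta_{TS}$, which is not (\ref{eq59}). Hence there is no consistent scaling of $(M,N)$ in which your three ingredients jointly deliver the claimed additive expression. (It is fair to note that the paper's own step (a) is itself heuristic — a rank-one matrix cannot converge to a scaled identity — but that shortcut is what allows the paper to reach (\ref{eq59}); your sharper bookkeeping makes the sum-versus-max difficulty visible and then does not resolve it.)
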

\begin{proof}
  Please refer to Appendix \ref{App4}.
\end{proof}

Then the maximal value of $EE_{PT}^{\mathrm{ub}}$ is obtained by
\underline{\textbf{P1-6}:} 
\begin{equation}\label{eq65}
   \mathop {\max }\limits_{0 \leq p \leq {P_{\max }}} \frac{{B{{\log }_2}\Big(1 + \frac{p}{{{\sigma ^2}}}M \left( {{\beta _{TR}} + \rho N{\beta _{SR}}{\beta _{TS}}} \right)\Big)}}{{\mu p + {P_s}}}
\end{equation}
Similar as \underline{\textbf{P1-4}}, we have the optimal solution to \underline{\textbf{P1-6}} as
\begin{equation}\label{eq67}
  {p^{\star}} = \left[ {\frac{{{\widetilde D }{P_s} - \mu }}{{\mu {\widetilde D }\mathrm{W}(\frac{{({\widetilde D }{P_s} - \mu )}}{{\mu {\text{e}}}})}} - \frac{1}{{{\widetilde D }}}} \right]_0^{{P_{\max }}},
\end{equation}
where $\widetilde D = {M({{\beta _{TR}} + \rho N{\beta _{SR}}{\beta _{TS}}})}/{{{\sigma ^2}}}$.

For the number of RIS-BD elements $N=256$, Fig. \ref{fig_4} plots $EE_{PT}^{\mathrm{ub}}$ versus the number of PT antennas $M$ for four different maximum transmit power $P_{\max}$. While (\ref{eq59}) was derived for asymptotic setup with $N \gg 1$, it is also applicable for the extreme case with no RIS-BD, i.e., $N = 0$, in which case the transmit beamforming is aimed at the primary link. It is observed from Fig. \ref{fig_4} that for the considered setup, when $P_{\max} \le 28 \text{dBm}$, the EE of the PT grows as $P_{\max} $ increases, while for $P_{\max} = 28 \text{dBm}$ and $30 \text{dBm}$, the same EE of the PT is achieved. This can be shown from the optimal power allocation (\ref{eq67}).
\vspace*{-.5\baselineskip}
\begin{figure}[H]
  \centering
  \includegraphics[width = .5\textwidth]{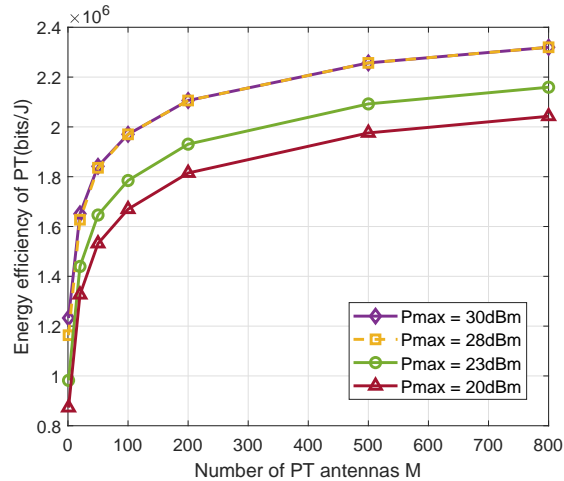}
  \vspace*{-.5\baselineskip}
  \caption{The EE of the PT versus the number of PT antennas $M$ with different $P_{\max}$ in MISO SR systems.}
  \vspace*{-.5\baselineskip}
  \label{fig_4}
\end{figure}

\subsection{Asymptotic Analysis of Maximum EE of the RIS-BD}\label{s4-2}

In order to obtain the tractable asymptotic performance analysis of EE of the RIS-BD, we consider the extreme case when the RIS-BD has massive elements, i.e., $N \gg 1 $. We first consider the special single-input single-output (SISO) SR setup for $M = 1$ to explore the effect of the number of RIS-BD elements on the EE of the RIS-BD, and then we analyze the general case of MISO setup.

Consider the optimization problem \underline{\textbf{P2}} whose goal is to maximize the individual EE of the RIS-BD with respect to ${{\mathbf{w}},{\boldsymbol{\Phi }}}$. For any given ${\boldsymbol{\Phi }}$, the optimal solution for $\mathbf{w}$ is the MRT beamforming: $\mathbf{w} = \sqrt{P_{\max}}\frac{\mathbf{G}^{{H}} \boldsymbol{\Phi}^{{H}} \mathbf{f}}{\|\mathbf{G}^{{H}} \boldsymbol{\Phi}^{{H}} \mathbf{f}\|}$, and the achievable rate of RIS-BD is $R_c = \frac{B}{L}{\log _2}\big( {1 + \frac{{\rho L {P_{\max}} }}{{{\sigma ^2}}}}{\|{{{\mathbf{f}}^{\mathrm{H}}}{\boldsymbol{\Phi}\mathbf{G}}} \|^2} \big)$, where
\begin{equation}\label{eq14}
  \|{{{\mathbf{f}}^{\mathrm{H}}}{\boldsymbol{\Phi}}{\mathbf{G}}} \|^2 =\sum\nolimits_{m=1}^M \left|\sum\nolimits_{n=1}^N {| f_n|| g_{n m}| e^{j\left(\theta_n+\arg( f_n^{*}+ g_{n m}) \right)}}\right|^2.
\end{equation}

\subsubsection{SISO SR}

In this case, the PT has only one antenna, i.e., $M = 1$. Therefore, $\|{{{\mathbf{f}}^{\mathrm{H}}}{\boldsymbol{\Phi}}{\mathbf{G}}} \|^2$ in (\ref{eq14}) reduces to
\begin{equation}\label{eq17}
  {\left| {{{\mathbf{f}}^{\mathrm{H}}}{\boldsymbol{\Phi}\mathbf{g}}} \right|^2} = {\left| {\sum\nolimits_{n = 1}^N {|{f_n}||{g_n}|{e^{j\left(\theta_n+\arg( f_n^{*}+ g_{n m}) \right)}}} } \right|^2},
\end{equation}
whose optimal RIS-BD phase shift is ${\theta _n^{\star}} = -\arg( f_n^{*}+ g_{n m}) $. Then, the EE of the RIS-BD can be written as
\begin{equation}\label{eq21}
  EE_{RIS-BD} = \frac{B}{{LN{P_r}}}{\log _2}\big( 1 + \frac{\rho L {P_{\max}}}{{\sigma ^2}}{\Big| {\sum\nolimits_{n = 1}^N {|{f_n}||{g_n}|} } \Big|^2}\big).
\end{equation}
For notational convenience, we define $X = \sum\nolimits_{n = 1}^N {|{f_n}||{g_n}|} $. Then, the random variable $X$ has the following property:
\begin{lemma}\label{Lem2}
  The mean of $X$ is $\mathbb{E} \left[ X \right] = \sum\nolimits_{n = 1}^N {{\mu _{f,n}}{\mu _{g,n}}}$, 
  where
  \begin{equation}\label{eq6}
    \begin{aligned}
      {\mu _{f,n}} =  \sqrt {\frac{{\beta _{SR}} \pi {\sum\limits_{r = 1}^N {\left| {{{({{\mathbf{R}}_{SR}})}_{n,r}}} \right|}} }{{4({K_3} + 1)}}}  \times {\mathrm{L}_{\frac{1}{2}}}\Big( { - \frac{{{K_3}}}{{\sum\limits_{r = 1}^N | {{({{\mathbf{R}}_{SR}})}_{n,r}}|}}} \Big), \\
      {\mu _{g,n}} =  \sqrt {\frac{{\beta _{TS}} \pi {\sum\limits_{r = 1}^N | {{({{\mathbf{R}}_{TS}})}_{n,r}}|}}{{4({K_2} + 1)}}}  \times {\mathrm{L}_{\frac{1}{2}}}\Big( { - \frac{{{K_2}}}{{\sum\limits_{r = 1}^N | {{({{\mathbf{R}}_{TS}})}_{n,r}}|}}} \Big). 
    \end{aligned}
  \end{equation}
\end{lemma}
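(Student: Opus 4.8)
The plan is to exploit the statistical independence of the RIS-BD-to-receiver link $\mathbf{f}$ and the PT-to-RIS-BD link $\mathbf{g}$ (the $M=1$ reduction of $\mathbf{G}$), together with linearity of expectation. Since $\mathbf{f}$ and $\mathbf{g}$ arise from independent fading realizations, for each $n$ the magnitudes $|f_n|$ and $|g_n|$ are independent, so that
\[
\mathbb{E}[X] = \sum_{n=1}^N \mathbb{E}[|f_n||g_n|] = \sum_{n=1}^N \mathbb{E}[|f_n|]\,\mathbb{E}[|g_n|].
\]
It therefore suffices to identify $\mu_{f,n} = \mathbb{E}[|f_n|]$ and $\mu_{g,n} = \mathbb{E}[|g_n|]$ separately. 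I would stress at the outset that, because only the first moment is required, correlations across the index $n$ play no role; only the \emph{marginal} law of each individual $f_n$ and $g_n$ enters.

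Next, from the Rician model in (\ref{eq32}), each $f_n$ is a complex Gaussian scalar consisting of a deterministic LoS component $\sqrt{\beta_{SR}K_3/(K_3+1)}\,(\mathbf{f}_{LoS})_n$ of unit modulus and a zero-mean circularly symmetric NLoS component $\sqrt{\beta_{SR}/(K_3+1)}\,(\mathbf{R}_{SR}^{1/2}\mathbf{f}_{NLoS})_n$. I would first extract the marginal parameters of $f_n$: the mean has magnitude $\nu_{f,n} = \sqrt{\beta_{SR}K_3/(K_3+1)}$, and the NLoS part is zero-mean with effective variance $\omega_{f,n}^2 = \frac{\beta_{SR}}{K_3+1}\sum_{r=1}^N |(\mathbf{R}_{SR})_{n,r}|$ induced by the $n$-th row of the spatial-correlation matrix. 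Consequently $|f_n|$ is Rician with non-centrality $\nu_{f,n}$ and per-real-dimension scale $\sigma_{f,n}^2 = \omega_{f,n}^2/2$, and $|g_n|$ is handled identically with $(\beta_{TS},K_2,\mathbf{R}_{TS})$.

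The closed form then follows from the standard mean of a Rician variable $R$ with non-centrality $\nu$ and scale $\sigma$, namely $\mathbb{E}[R] = \sigma\sqrt{\pi/2}\,\mathrm{L}_{1/2}\!\left(-\nu^2/(2\sigma^2)\right)$, obtained by integrating $r$ against the Rician density and recognizing the resulting Bessel integral as the half-order Laguerre polynomial. Substituting $\sigma_{f,n}^2 = \omega_{f,n}^2/2$ and $\nu_{f,n}$ gives $\nu_{f,n}^2/(2\sigma_{f,n}^2) = K_3/\sum_{r}|(\mathbf{R}_{SR})_{n,r}|$ and $\sigma_{f,n}\sqrt{\pi/2} = \sqrt{\pi\beta_{SR}\sum_{r}|(\mathbf{R}_{SR})_{n,r}|/(4(K_3+1))}$, which is precisely the claimed $\mu_{f,n}$; the same substitution with the PT-to-RIS-BD parameters yields $\mu_{g,n}$, completing the evaluation of $\mathbb{E}[X]$.

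The main obstacle I anticipate is the second step: correctly pinning down the effective per-element NLoS variance $\omega_{f,n}^2$ so that the row-sum $\sum_{r}|(\mathbf{R}_{SR})_{n,r}|$ of the correlation matrix surfaces inside both the prefactor and the Laguerre argument. This demands careful bookkeeping of how the spatial correlation propagates into the marginal variance of a single entry of $\mathbf{R}_{SR}^{1/2}\mathbf{f}_{NLoS}$, whereas the remaining manipulations — the independence-based factorization and the closed-form evaluation of the Rician mean via $\mathrm{L}_{1/2}$ — are essentially routine.
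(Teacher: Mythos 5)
Your proposal is correct and follows essentially the same route as the paper's proof: factor $\mathbb{E}[|f_n||g_n|]=\mathbb{E}[|f_n|]\,\mathbb{E}[|g_n|]$ by independence, identify each factor as a Rician mean expressed via $\mathrm{L}_{1/2}$ (the paper cites this formula from its reference, and takes the marginal law with the row-sum variance $\frac{\beta_{SR}}{K_3+1}\sum_{r}|(\mathbf{R}_{SR})_{n,r}|$ as a stated channel assumption, exactly as you do), and sum over $n$. Your observation that linearity of expectation makes cross-index correlations irrelevant is in fact slightly cleaner than the paper, which needlessly invokes the CLT to conclude a statement about the mean alone.
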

\begin{proof}
  Please refer to Appendix \ref{App1}.
\end{proof}
\begin{corollary}\label{cor1}
  For SISO RIS-BD-based SR with massive reflecting elements, i.e., $N \gg 1$ and $\sum\nolimits_{r = 1}^N {|{{\left( {{{\mathbf{R}}_{SR}}} \right)}_{n,r}}|}  = \sum\nolimits_{r = 1}^N {|{{\left( {{{\mathbf{R}}_{TS}}} \right)}_{n,r}}|}  = 1$\cite{9357969}, the maximum EE of the RIS-BD in (\ref{eq21}) approaches to
  \begin{equation}\label{eq7}
    EE_{RIS-BD} \to \frac{B}{{LN{P_r}}}{\log _2}(1 + \frac{{\rho L{P_{\max }}}}{{{\sigma ^2}}}{N^2}\mu _f^2\mu _g^2),
  \end{equation}
  where
  \begin{equation}\label{eq23}
    {\mu _f} = \sqrt {\frac{{{\beta _{SR}}\pi }}{{4({K_3} + 1)}}}{\mathrm{L}_{\frac{1}{2}}}(-{K_3}), \;
    {\mu _g} = \sqrt {\frac{{\beta _{TS}} \pi }{{4({K_2} + 1)}}}{\mathrm{L}_{\frac{1}{2}}}(-{K_2}).
  \end{equation}
\end{corollary}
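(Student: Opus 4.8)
The plan is to reduce everything to the behavior of the scalar $X = \sum_{n=1}^N |f_n||g_n|$, whose mean is already supplied by Lemma~\ref{Lem2}, and then to argue by channel hardening that $X$ may be replaced by its mean inside the logarithm of (\ref{eq21}).

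First I would specialize Lemma~\ref{Lem2} to the stated normalization $\sum_{r=1}^N |(\mathbf{R}_{SR})_{n,r}| = \sum_{r=1}^N |(\mathbf{R}_{TS})_{n,r}| = 1$. Setting the inner sums in (\ref{eq6}) equal to $1$ removes the dependence on $n$, so that $\mu_{f,n} = \mu_f$ and $\mu_{g,n} = \mu_g$ with $\mu_f,\mu_g$ precisely as defined in (\ref{eq23}). Lemma~\ref{Lem2} then immediately gives $\mathbb{E}[X] = \sum_{n=1}^N \mu_{f,n}\mu_{g,n} = N\mu_f\mu_g$.

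The core of the argument is to show that, as $N \to \infty$, $X$ concentrates on $N\mu_f\mu_g$. Writing $a_n = |f_n|$ and $b_n = |g_n|$ and using that the links $\mathbf{f}$ and $\mathbf{G}$ are independent, I would make the concentration quantitative through Chebyshev's inequality, which requires $\mathrm{Var}[X]/(\mathbb{E}[X])^2 \to 0$. The covariance of two summands factorizes as $\mathrm{Cov}[a_nb_n,a_{n'}b_{n'}] = \mathbb{E}[a_na_{n'}]\,\mathbb{E}[b_nb_{n'}] - \mu_{f,n}\mu_{f,n'}\mu_{g,n}\mu_{g,n'}$; the $N$ diagonal terms each contribute $O(1)$, hence $O(N)$ in total. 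For the off-diagonal terms I would invoke the row-sum normalization to argue that the amplitude cross-correlations of $|f_n|$ with $|f_{n'}|$ and of $|g_n|$ with $|g_{n'}|$ carry only bounded total mass per row, so that $\mathrm{Var}[X] = O(N)$. Since $(\mathbb{E}[X])^2 = N^2\mu_f^2\mu_g^2 = O(N^2)$, the ratio is $O(1/N)\to 0$, giving $X/(N\mu_f\mu_g)\to 1$ and hence $X^2 \to N^2\mu_f^2\mu_g^2$ in probability. In the fully uncorrelated limit $\mathbf{R}_{SR} = \mathbf{R}_{TS} = \mathbf{I}_N$ the summands $a_nb_n$ are i.i.d., so the strong law of large numbers yields the same conclusion directly and without any variance bookkeeping.

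Finally I would substitute $X^2 \to N^2\mu_f^2\mu_g^2$ into (\ref{eq21}) to recover exactly (\ref{eq7}). The step I expect to be the main obstacle is the off-diagonal variance control: it is the only place where the spatial-correlation structure and the normalization $\sum_r|(\mathbf{R})_{n,r}| = 1$ are genuinely used, and one must verify that the correlations are weak enough that the covariance sum grows linearly rather than quadratically in $N$---the i.i.d. route above being the clean special case that sidesteps this.
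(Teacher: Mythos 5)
Your proposal is correct and follows essentially the same route as the paper: the paper's proof likewise specializes Lemma~\ref{Lem2} to get $\mathbb{E}[|f_n||g_n|]=\mu_f\mu_g$ and then invokes the law of large numbers, $\frac{1}{N}\sum_{n=1}^N |f_n||g_n| \to \mu_f\mu_g$, to replace $X^2$ by $N^2\mu_f^2\mu_g^2$ inside the logarithm of (\ref{eq21}). The only difference is that the paper states this LLN step in one line, implicitly treating the products $|f_n||g_n|$ as i.i.d., whereas your Chebyshev-based variance bookkeeping for the spatially correlated case is extra rigor the paper does not attempt; your i.i.d. special case is exactly the paper's argument.
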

\begin{proof}
  Since $|f_n|$ and $|g_n|$ are statistically independent and follow Rayleigh distribution with mean values $\mu_f$ and $\mu_g$, respectively, we have $\mathbb{E} \left[ {\left| {{f_n}} \right|\left| {{g_n}} \right|} \right] = {\mu _f}{\mu _g}$. By using the fact that $\frac{1}{N}\sum\nolimits_{n = 1}^N {\left| {{f_n}} \right|\left| {{g_n}} \right|}  \to \mathbb{E} \left[ {\left| {{f_n}} \right|\left| {{g_n}} \right|} \right] = {\mu _f}{\mu _g}$ as $N \to \infty$, it follows that
  \begin{equation}\label{eq24}
    \begin{aligned}
      EE_{RIS-BD} & \to  \frac{B}{{LN{P_r}}}{\log _2}\Big(1 + \frac{{\rho L{P_{\max }}}}{{{\sigma ^2}}}{\left( {N\mathbb{E}\left[ {\left| {{f_n}} \right|\left| {{g_n}} \right|} \right]} \right)^2}\Big) = \frac{B}{{LN{P_r}}}{\log _2}(1 + \frac{{\rho L{P_{\max }}}}{{{\sigma ^2}}}{N^2}\mu _f^2\mu _g^2)\\
      & = \frac{B}{{LN{P_r}}}{\log _2}\Big(1 + \frac{{\rho L{P_{\max }}}}{{{\sigma ^2}}}\frac{{{N^2}{\pi ^2}{\beta _{SR}}{\beta _{TS}}\mathrm{L}_{\frac{1}{2}}^2(-{K_2})\mathrm{L}_{\frac{1}{2}}^2(-{K_3})}}{{16({K_2} + 1)({K_3} + 1)}}\Big).
    \end{aligned}
  \end{equation}
  This thus completes the proof.
\end{proof}
\begin{corollary}\label{cor2}
  For SISO RIS-BD-based SR with massive reflecting elements under i.i.d. Rayleigh fading channels, i.e., $N \gg 1$, $K_1 = K_2 = K_3 = 0$ and ${\mathbf{R}}_{SR} = {\mathbf{R}}_{TS} = {{\mathbf{I}}_N}$, the maximum EE of the RIS-BD in (\ref{eq21}) approaches to
  \begin{equation}\label{eq8}
    EE_{RIS-BD} \to \frac{B}{{LN{P_r}}}{\log _2}\Big(1 + \frac{{\rho L{P_{\max }}}}{{{\sigma ^2}}}\frac{{N^2}{{\pi ^2}{\beta _{SR}}{\beta _{TS}}}}{{16}}\Big).
  \end{equation}
\end{corollary}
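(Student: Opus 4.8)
The plan is to obtain this statement as an immediate specialization of Corollary~\ref{cor1}, since the i.i.d.\ Rayleigh setup is precisely the case $K_2 = K_3 = 0$ with identity spatial correlation matrices. First I would check that the normalization hypothesis of Corollary~\ref{cor1}, namely $\sum_{r=1}^N |(\mathbf{R}_{SR})_{n,r}| = \sum_{r=1}^N |(\mathbf{R}_{TS})_{n,r}| = 1$, is met automatically here: with ${\mathbf{R}}_{SR} = {\mathbf{R}}_{TS} = {{\mathbf{I}}_N}$ each row of the identity matrix contains a single unit entry on the diagonal and zeros elsewhere, so each row sum of absolute values equals $1$. Consequently the asymptotic expression (\ref{eq7}) applies verbatim, and it remains only to evaluate the constants $\mu_f$ and $\mu_g$ of (\ref{eq23}) at $K_2 = K_3 = 0$.

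The single nonroutine ingredient is the value of the Laguerre polynomial of order $1/2$ at the origin, $\mathrm{L}_{1/2}(0) = 1$. I would justify this by noting that $\mathrm{L}_{1/2}(-K)$ enters (\ref{eq23}) through the mean of a Rician magnitude; at $K = 0$ the Rician law degenerates to Rayleigh, whose mean $\sqrt{\beta\pi/4}$ must be recovered, which forces $\mathrm{L}_{1/2}(0) = 1$ (equivalently, this follows from the confluent-hypergeometric closed form of $\mathrm{L}_{1/2}(\cdot)$ evaluated at zero). Substituting $K_3 = 0$ into (\ref{eq23}) then yields $\mu_f^2 = \beta_{SR}\pi/4$, and $K_2 = 0$ gives $\mu_g^2 = \beta_{TS}\pi/4$.

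Finally I would insert these constants into (\ref{eq7}). Their product is $\mu_f^2\mu_g^2 = \pi^2\beta_{SR}\beta_{TS}/16$, so that
\begin{equation}
  EE_{RIS-BD} \to \frac{B}{LN P_r}\log_2\Big(1 + \frac{\rho L P_{\max}}{\sigma^2}\frac{N^2\pi^2\beta_{SR}\beta_{TS}}{16}\Big),
\end{equation}
which is exactly (\ref{eq8}). Since the genuine analytical content---the mean computation of Lemma~\ref{Lem2} and the channel-hardening limit $\frac{1}{N}\sum_{n}|f_n||g_n| \to \mu_f\mu_g$ used in the proof of Corollary~\ref{cor1}---has already been established, I expect no real obstacle beyond recording the evaluation $\mathrm{L}_{1/2}(0)=1$ so that the specialization is self-contained. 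Alternatively, one could bypass Corollary~\ref{cor1} entirely and compute $\mu_f$ and $\mu_g$ directly as Rayleigh means, reaching the same expression.
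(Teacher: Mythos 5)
Your proposal is correct and follows essentially the same route as the paper: the paper gives no separate proof for Corollary~\ref{cor2}, treating it as the immediate specialization of the final expression (\ref{eq24}) in the proof of Corollary~\ref{cor1} with $K_2 = K_3 = 0$ and $\mathrm{L}_{\frac{1}{2}}(0)=1$, which is exactly what you do. Your explicit checks---that the identity correlation matrices satisfy the row-sum normalization and that $\mathrm{L}_{\frac{1}{2}}(0)=1$ recovers the Rayleigh mean---are details the paper leaves implicit, so they only make the argument more self-contained.
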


Fig. \ref{fig_6} plots $EE_{RIS-BD}$ versus the number of RIS-BD element $N$ for four different maximum transmit power $P_{\max}$ in SISO SR systems. It is observed from Fig. \ref{fig_6} that when $N$ is sufficiently large, $EE_{RIS-BD}$ goes down as $N$ increases.However, when $N$ is relatively small, i.e., $N \le 50$, the trends of $EE_{RIS-BD}$ show great difference at different levels of $P_{\max}$. Specifically, when $P_{\max}$ is relatively large, i.e, $P_{\max} = 36$ or $40\text{dBm}$, $EE_{RIS-BD}$ decrease sharply with the increase of $N$. On the other hand, when $P_{\max} = 26$ or $30\text{dBm}$, $EE_{RIS-BD}$ will firstly increase and then decrease with the growth of $N$. This is not difficult to show by (\ref{eq8}) that the circuit power consumed by RIS-BD increases linearly with $N$, while the achievable rate of the RIS-BD shows logarithmic growth. Moreover, due to the fact that the RIS-BD does not provide power actively, it is apparent that the EE of the RIS-BD will increase as $P_{\max} $ goes large.
\vspace*{-1\baselineskip}
\begin{figure}[H]
  \centering
  \includegraphics[width = .48\textwidth]{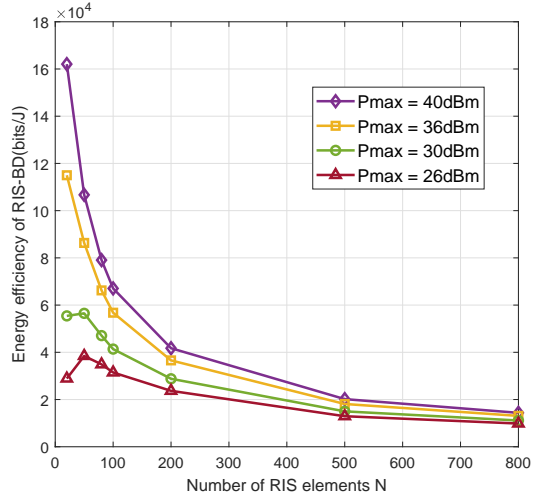}
  \vspace*{-.5\baselineskip}
  \caption{The EE of the RIS-BD versus the number of RIS-BD elements $N$ with different $P_{\max}$ in SISO SR systems.}
  \vspace*{-1\baselineskip}
  \label{fig_6}
\end{figure}

\subsubsection{MISO SR}

When the PT has multiple antennas and the RIS-BD has very large number of elements, i.e., $M > 1$ and $N \gg 1$, the EE of the RIS-BD can thus be written as
\begin{equation}\label{eq81}
  EE_{RIS-BD} = \frac{B}{{LN{P_r}}}{\log _2}\Big( 1 + \frac{\rho L {P_{\max}}}{{\sigma ^2}}{\|{{{\mathbf{f}}^{\mathrm{H}}}{\boldsymbol{\Phi}\mathbf{G}}} \|^2}\Big).
\end{equation}

\begin{lemma}\label{Lem3}
  Define $Y = \|{{{\mathbf{f}}^{\mathrm{H}}}{\boldsymbol{\Phi}\mathbf{G}}} \|^2$ in (\ref{eq14}), the random variable $Y$ thus follows non-central Chi-square distribution with $2M$ degrees of freedom, whose non-centrality parameter is
  \begin{equation}\label{eq26}
    \lambda  = \sum\nolimits_{m = 1}^M {{\Big( {\sum\nolimits_{n = 1}^N {\mu _{f,n}^*  {\mu _{g,n{m}}} {e^{j{\theta _n}}}} } \Big)^2}},
  \end{equation}
  where ${\mu _{f,n}} = \sqrt {\frac{{{\beta _{SR}}{K_3}}}{{{K_3} + 1}}} {f_{LoS,n}}$ and ${\mu _{g,n{m}}} = \sqrt {\frac{{{\beta _{TS}}{K_2}}}{{{K_2} + 1}}} {g_{LoS,n{m}}}$. 

  Therefore, the mean of $Y$ is $\mathbb{E}\left[ Y \right] =  \sum\nolimits_{m = 1}^M {\sum\nolimits_{n = 1}^N {\sigma _{f,n}^2\sigma _{g,n}^2 + \mu _{f,n}^2\sigma _{f,n}^2 + \mu _{g,nm}^2\sigma _{g,n}^2}}  + \lambda$, where $\sigma _{f,n}^2 = \frac{{{\beta _{SR}}}}{{{K_3} + 1}} \sum\nolimits_{r = 1}^N {|{{\left( {{{\mathbf{R}}_{SR}}} \right)}_{n,r}}|}$ and $\sigma _{g,n}^2 = \frac{{{\beta _{TS}}}}{{{K_2} + 1}}\sum\nolimits_{r = 1}^N {|{{\left( {{{\mathbf{R}}_{TS}}} \right)}_{n,r}}|}$.
\end{lemma}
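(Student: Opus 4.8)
The plan is to recognize that $Y=\|\mathbf{f}^{\mathrm H}\boldsymbol{\Phi}\mathbf{G}\|^2$ is a sum of $M$ squared magnitudes and to establish, via a central limit theorem (CLT) argument in the regime $N\gg 1$, that each such magnitude arises from an asymptotically complex Gaussian random variable. Writing the $m$-th entry of the row vector $\mathbf{f}^{\mathrm H}\boldsymbol{\Phi}\mathbf{G}$ as $Z_m=\sum_{n=1}^N f_n^{*}e^{j\theta_n}g_{nm}$, we have $Y=\sum_{m=1}^M |Z_m|^2$. First I would decompose each Rician coefficient from (\ref{eq32}) into its deterministic line-of-sight mean and its zero-mean non-LoS fluctuation, $f_n=\mu_{f,n}+\tilde f_n$ and $g_{nm}=\mu_{g,nm}+\tilde g_{nm}$, so that $Z_m$ becomes a sum over $n$ of a deterministic part plus products of, and linear terms in, the independent zero-mean fluctuations.

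Next, since $Z_m$ is a sum of $N$ terms each with finite moments, and the fluctuations $\tilde f_n$ and $\tilde g_{nm}$ of the two links are mutually independent, I would invoke the CLT (in its Lyapunov form, to accommodate the spatial correlation carried by $\mathbf{R}_{SR}$ and $\mathbf{R}_{TS}$) to conclude that, as $N\to\infty$, $Z_m$ converges to a complex Gaussian variable. Its mean follows immediately from the independence of $\mathbf{f}$ and $\mathbf{G}$ together with the zero mean of the fluctuations, giving $\mathbb{E}[Z_m]=\sum_{n=1}^N\mu_{f,n}^{*}\mu_{g,nm}e^{j\theta_n}$. A complex Gaussian $Z_m$ with non-zero mean has $|Z_m|^2$ distributed as a scaled non-central chi-square with two degrees of freedom, so summing the $M$ contributions yields the claimed $2M$ degrees of freedom, with the non-centrality parameter in (\ref{eq26}) being the aggregate squared mean $\lambda=\sum_{m=1}^M(\mathbb{E}[Z_m])^2$.

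Finally, for $\mathbb{E}[Y]=\sum_{m=1}^M\mathbb{E}[|Z_m|^2]$ I would expand $|Z_m|^2=\sum_{n,n'}f_n^{*}f_{n'}g_{nm}g_{n'm}^{*}e^{j(\theta_n-\theta_{n'})}$ and take expectations term by term, using the independence of the two links and the fact that cross terms containing exactly one unmatched fluctuation vanish. The surviving contributions split into the mean--mean product, which reassembles into $\lambda$, and the second-order fluctuation terms, whose variances $\sigma_{f,n}^2=\frac{\beta_{SR}}{K_3+1}\sum_{r}|(\mathbf{R}_{SR})_{n,r}|$ and $\sigma_{g,n}^2=\frac{\beta_{TS}}{K_2+1}\sum_{r}|(\mathbf{R}_{TS})_{n,r}|$ absorb the row-sums of the correlation matrices, reproducing the stated $\mathbb{E}[Y]$. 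The main obstacle I anticipate is the careful moment bookkeeping: each summand of $Z_m$ is a \emph{product} of two Gaussians, hence not itself Gaussian, so the CLT must be justified for a correlated triangular array, and in the second-moment computation one must track precisely which LoS/NLoS cross terms survive under the spatial correlation rather than under the simpler i.i.d. assumption.
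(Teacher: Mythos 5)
Your proposal is correct and follows essentially the same route as the paper's proof: the paper likewise defines $T_m=\sum_{n=1}^N f_n^{*}g_{nm}e^{j\theta_n}$ (your $Z_m$), invokes the CLT for $N\gg 1$ to conclude $T_m$ is complex Gaussian with mean $\sum_{n=1}^N \mu_{f,n}^{*}\mu_{g,nm}e^{j\theta_n}$ and variance $\sum_{n=1}^N\sigma_{n,m}^2$, identifies each $|T_m|^2$ as a non-central chi-square with two degrees of freedom, and sums over $m$ via additivity to get the $2M$-degree claim and the stated mean. The only difference is one of care rather than of route: you explicitly flag the two points the paper glosses over (justifying the CLT for the spatially correlated, non-i.i.d.\ array, and the fact that each summand is a product of Gaussians rather than Gaussian), but this does not alter the structure of the argument.
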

\begin{proof}
  Please refer to Appendix \ref{App2}.
\end{proof}

To get more insight, we consider the special case of i.i.d. Rayleigh fading channels, for which we have the following Corollary \ref{cor3}.
\begin{corollary}\label{cor3}
  For MISO RIS-BD-based SR with massive reflecting elements under i.i.d. Rayleigh fading channels, i.e., $M \gg 1$, $N \gg 1$, $K_1 = K_2 = K_3 = 0$ and ${\mathbf{R}}_{SR} = {\mathbf{R}}_{TS} = {{\mathbf{I}}_N}$, the maximum EE of the RIS-BD in (\ref{eq81}) approaches to
  \begin{equation}\label{eq33}
    EE_{RIS-BD} \to \frac{B}{{LN{P_r}}}{\log _2}(1 + \frac{{\rho L{P_{\max }}}}{{{\sigma ^2}}}{MN{\beta _{SR}}{\beta _{TS}}}).
  \end{equation}
\end{corollary}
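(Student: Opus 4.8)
The plan is to reduce Corollary~\ref{cor3} to Lemma~\ref{Lem3} by specializing the distribution of $Y = \|\mathbf{f}^{\mathrm{H}}\boldsymbol{\Phi}\mathbf{G}\|^2$ to the i.i.d. Rayleigh setting, and then to argue that $Y$ concentrates on its mean once both $M$ and $N$ are large, so that the random SNR term $\tfrac{\rho L P_{\max}}{\sigma^2}Y$ in (\ref{eq81}) may be replaced by $\tfrac{\rho L P_{\max}}{\sigma^2}\mathbb{E}[Y]$. First I would substitute $K_1=K_2=K_3=0$ into the quantities defined below (\ref{eq26}): since $\mu_{f,n}=\sqrt{\tfrac{\beta_{SR}K_3}{K_3+1}}\,f_{LoS,n}$ and $\mu_{g,nm}=\sqrt{\tfrac{\beta_{TS}K_2}{K_2+1}}\,g_{LoS,nm}$, setting $K_2=K_3=0$ annihilates both LoS means, so the non-centrality parameter $\lambda$ in (\ref{eq26}) vanishes and $Y$ degenerates to a \emph{central} chi-square variable with $2M$ degrees of freedom.

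Next I would evaluate the two variances appearing in Lemma~\ref{Lem3} under $\mathbf{R}_{SR}=\mathbf{R}_{TS}=\mathbf{I}_N$. Because the $n$-th row sum of an identity matrix is $\sum_{r=1}^N|(\mathbf{I}_N)_{n,r}|=1$, we get $\sigma_{f,n}^2=\tfrac{\beta_{SR}}{K_3+1}\cdot 1=\beta_{SR}$ and $\sigma_{g,n}^2=\tfrac{\beta_{TS}}{K_2+1}\cdot 1=\beta_{TS}$ for every $n$. Feeding $\lambda=0$, $\mu_{f,n}=\mu_{g,nm}=0$, $\sigma_{f,n}^2=\beta_{SR}$ and $\sigma_{g,n}^2=\beta_{TS}$ into the mean expression of Lemma~\ref{Lem3} collapses it to $\mathbb{E}[Y]=\sum_{m=1}^M\sum_{n=1}^N\sigma_{f,n}^2\sigma_{g,n}^2=MN\beta_{SR}\beta_{TS}$, which is exactly the constant sitting inside the logarithm of (\ref{eq33}).

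The main obstacle is justifying the replacement $Y\to\mathbb{E}[Y]$, i.e.\ the concentration/channel-hardening step. I see two complementary routes. The probabilistic route uses that, in the i.i.d. case, $Y$ is a scaled central chi-square with $2M$ degrees of freedom: its mean is $\Theta(MN)$ while its standard deviation is $\Theta(\sqrt{M}\,N)$, so the relative fluctuation $\sqrt{\mathrm{Var}[Y]}/\mathbb{E}[Y]=\Theta(1/\sqrt{M})\to 0$ as $M\gg 1$, whence $Y/\mathbb{E}[Y]\to 1$. The algebraic route writes $Y=\mathbf{f}^{\mathrm{H}}\boldsymbol{\Phi}\,\mathbf{G}\mathbf{G}^{\mathrm{H}}\boldsymbol{\Phi}^{\mathrm{H}}\mathbf{f}$ and invokes channel hardening $\tfrac{1}{M}\mathbf{G}\mathbf{G}^{\mathrm{H}}\to\beta_{TS}\mathbf{I}_N$ together with $\boldsymbol{\Phi}\boldsymbol{\Phi}^{\mathrm{H}}=\mathbf{I}_N$ (unit-modulus reflection) to obtain $Y\to M\beta_{TS}\|\mathbf{f}\|^2$, followed by the law of large numbers $\|\mathbf{f}\|^2=\sum_{n=1}^N|f_n|^2\to N\beta_{SR}$. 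A point deserving care is that both routes make the limit \emph{independent of the phase shifts} $\boldsymbol{\phi}$: once $\mathbf{G}\mathbf{G}^{\mathrm{H}}$ is proportional to the identity, the maximization over $\boldsymbol{\Phi}$ in (\ref{eq81}) yields no coherent-combining gain, so the ``maximum'' EE coincides with the mean-value expression. This is precisely why the MISO result scales as $MN$ rather than as the $N^2$ coherent gain of the SISO Corollary~\ref{cor2}, and it tacitly relies on the regime $M\gtrsim N$ in which hardening dominates any single-column alignment.

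Finally I would substitute $Y\to MN\beta_{SR}\beta_{TS}$ into (\ref{eq81}) to obtain $EE_{RIS-BD}\to \frac{B}{LN P_r}\log_2\!\big(1+\frac{\rho L P_{\max}}{\sigma^2}MN\beta_{SR}\beta_{TS}\big)$, which is (\ref{eq33}), completing the argument.
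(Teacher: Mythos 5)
Your proposal is correct and takes essentially the same route as the paper, which presents Corollary~\ref{cor3} as a direct specialization of Lemma~\ref{Lem3}: setting $K_2=K_3=0$ annihilates the LoS means and the non-centrality parameter $\lambda$, setting $\mathbf{R}_{SR}=\mathbf{R}_{TS}=\mathbf{I}_N$ gives $\sigma_{f,n}^2=\beta_{SR}$ and $\sigma_{g,n}^2=\beta_{TS}$, so $\mathbb{E}[Y]=MN\beta_{SR}\beta_{TS}$, and channel hardening replaces $Y$ by its mean inside the logarithm of (\ref{eq81}). Your closing caveat, however, is a genuine subtlety that the paper glosses over: since for any fixed $\boldsymbol{\Phi}$ the distribution of $Y$ is phase-independent, the concentration argument only pins down $Y(\boldsymbol{\Phi})$ at a fixed $\boldsymbol{\Phi}$, whereas the \emph{maximum} over $\boldsymbol{\Phi}$ is at least the single-column coherent gain $\tfrac{\pi^2}{16}N^2\beta_{SR}\beta_{TS}$ of Corollary~\ref{cor2}, so the stated $MN$ scaling is the dominant term only when $M$ does not grow much more slowly than $N$ --- exactly the regime restriction you identified.
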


Fig. \ref{fig_7:a} and \ref{fig_7:b} plot the EE of the RIS-BD $EE_{RIS-BD}$ versus the number of RIS-BD element $N$ or the number of PT antennas $M$ in MISO SR systems. Comparing these two figures, we can clearly see that the increase of the number of PT antennas $M$ contributes to the improvement of $EE_{RIS-BD}$. On the contrary, the increase of the number of RIS-BD elements $N$ may compromise $EE_{RIS-BD}$.
\vspace*{-1\baselineskip}
\begin{figure}[H]
  \centering
  \subfigure[versus the number of RIS-BD elements $N$]{
  \label{fig_7:a}
  \includegraphics[width = .46\textwidth]{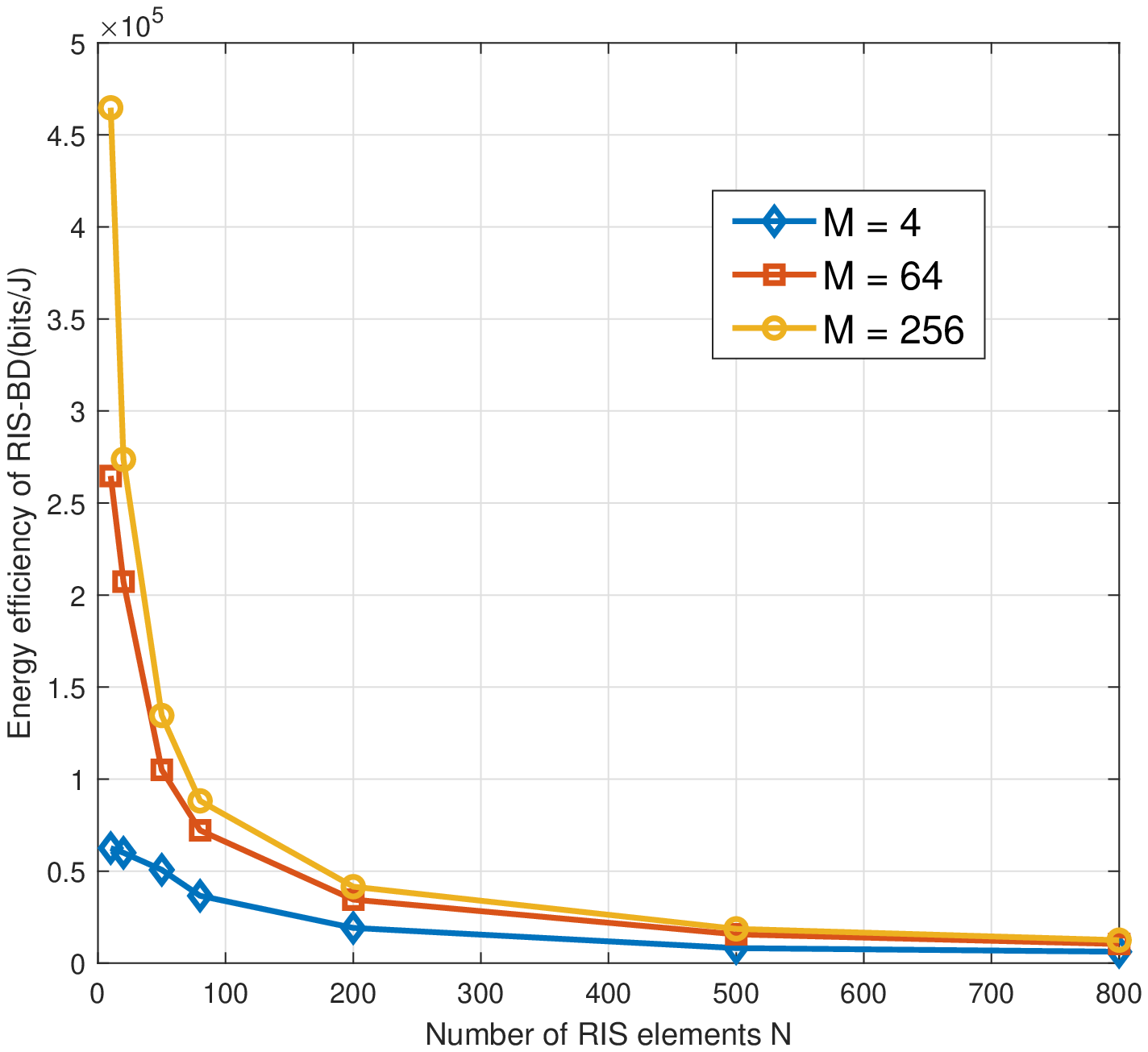}}
  \subfigure[versus the number of PT antennas $M$ ]{
  \label{fig_7:b}
  \includegraphics[width = .5\textwidth]{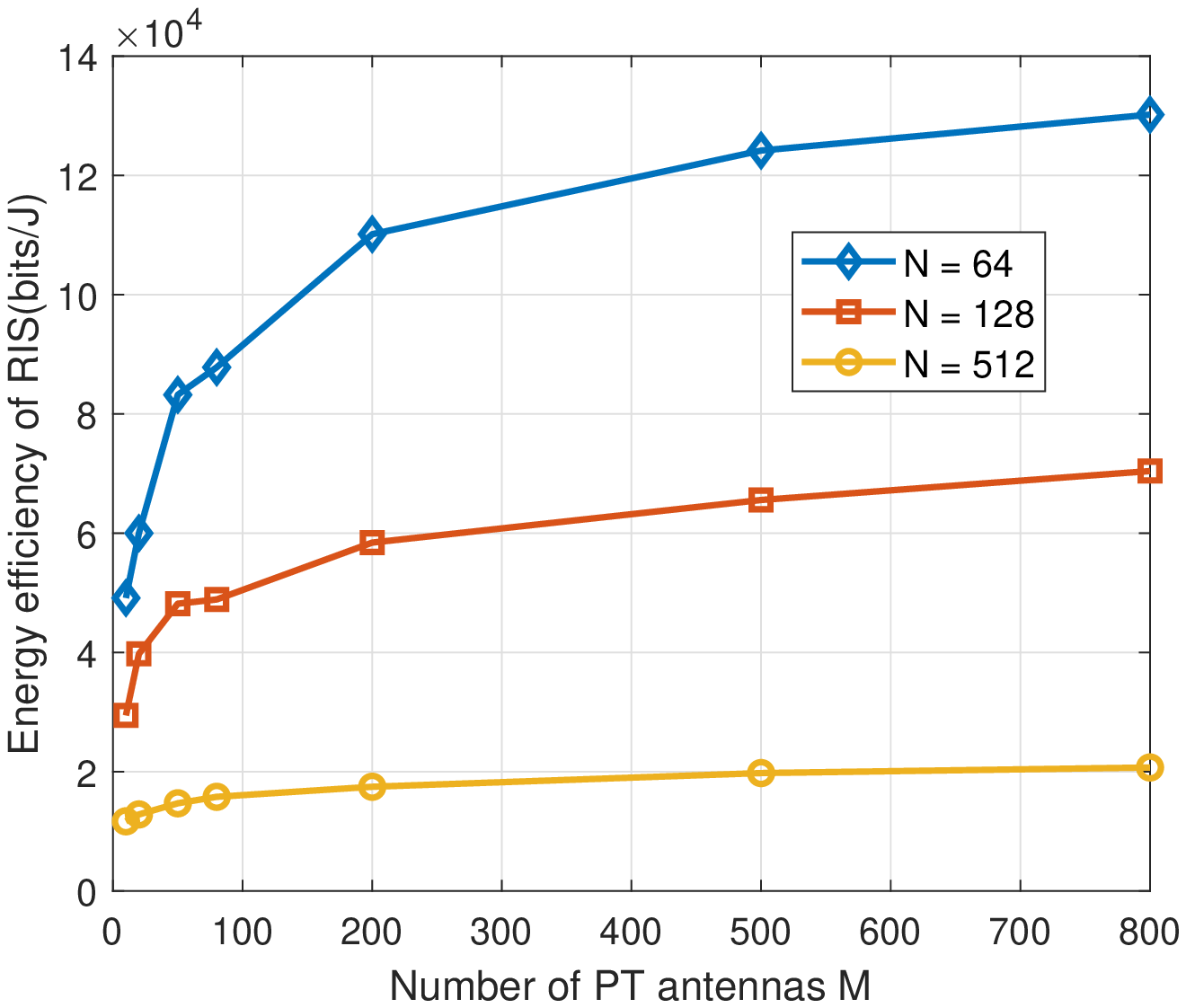}}
  \vspace*{-.5\baselineskip}
  \caption{The EE of the RIS-BD in MISO SR systems.}
  \vspace*{-1\baselineskip}
  \label{fig_7}
\end{figure}

\section{EE Region Characterization}\label{s5}

Of particular interest of the EE region in (\ref{eq27}) is its outer boundary, also called the Pareto boundary, which is defined as the union of all EE pairs $\left( {EE_{PT},EE_{RIS-BD}} \right)$ for which it is impossible to increase one without decreasing the other\cite{8316986, 9261117}.

\subsection{Problem Formulation and Transformation}\label{s5-1}
By following similar technique in \cite{10039158}, we can characterize the Pareto optimal EE pairs based on the concept of EE profile. Specifically, any EE pair on the Pareto boundary of the EE region ${{\cal H} }$ can be obtained via solving the following problem \underline{\textbf{P3}} with a given EE profile ${\boldsymbol{\alpha}} = \left( {\alpha ,1-\alpha }\right)$:
\begin{equation}\label{eq5}
\begin{aligned}
  \mathop {\max }\limits_{{\eta},{\mathbf{w}},{\boldsymbol{\Phi}}} & \quad \eta \\
  \text{s.t.}
  \; & \text{C1}:\;{\|{\bf{w}}\|^2}\le P_{\max}, \\
  \; & \text{C2}:\; \phi_n=e^{j \theta_n}, {\theta _n} \in [0, 2\pi) ,\forall n = 1,\cdots,N, \\
  \; & \text{C3}:\; \frac{B{\mathbb{E}_c}\left[ {{{\log }_2}\Big( {1 + \frac{| {( {{{\mathbf{h}}^{\mathrm{H}}} + \sqrt \rho  {{\mathbf{f}}^{\mathrm{H}}}{\mathbf{\Phi G}}c} ){\mathbf{w}}} |^2}{{{\sigma ^2}}}} \Big)} \right]}{{\mu {{{\| {\mathbf{w}} \|}^2}} + P_s}} \ge \alpha \eta, \\
  \; & \text{C4}:\; \frac{B}{{LN{P_r}}}{\log _2}\Big( {1 + \frac{{\rho L{{| {{{\mathbf{f}}^{\mathrm{H}}}{\mathbf{\Phi Gw}}} |}^2}}}{{{\sigma ^2}}}} \Big) \ge \left( {1 - \alpha } \right)\eta ,
\end{aligned}
\end{equation}
where $0 < \alpha < 1$ is the target ratio between $EE_{PT}$ and $ \eta $. By varying $\alpha$ between $0$ and $1$, the complete Pareto boundary of the EE region ${\cal H }$ can be characterized. For a given ${\boldsymbol{\alpha}}$, we denote ${\eta}^{\star}$ as the optimal value of \underline{\textbf{P3}}. Then ${{\eta}^{\star}}{\boldsymbol{\alpha}}$ is a Pareto optimal EE pair corresponding to the intersection between a ray in the direction of ${\boldsymbol{\alpha}}$ and the Pareto boundary of the EE region.
Notice that since the left hand side of $\text{C4}$ is monotonically increasing with respect to $\gamma_c = \frac{{\rho L{{| {{{\mathbf{f}}^{\mathrm{H}}}{\mathbf{\Phi Gw}}} |}^2}}}{{{\sigma ^2}}}$, we can recast it into a new constraint. For any fixed value $\eta$, with the definition before, \underline{\textbf{P3}} can be transformed to the following feasibility-check problem 
\underline{\textbf{P3-1}:}
\begin{equation}\label{eq35}
\begin{aligned}
  {\text{Find}} & \quad {\mathbf{w}},{\boldsymbol{\phi}} \\
  \text{s.t.}
  \; & \text{C1}:\;{\|{\bf{w}}\|^2}\le P_{\max}, \\
  \; & \text{C2}:\; \phi_n=e^{j \theta_n}, {\theta _n} \in [0, 2\pi) ,\forall n = 1,\cdots,N, \\
  \; & \text{C3-1}:\; \frac{B{\mathbb{E}_c}\left[ {{{\log }_2}\Big( {1 + | {({{\widehat {\mathbf{h}}^{\mathrm{H}}} + {{\boldsymbol{\phi}} ^{\mathrm{H}}}\widehat {\mathbf{M}} c} ){\mathbf{w}}} |^2} \Big)} \right]}{{\mu {{{\| {\mathbf{w}} \|}^2}} + P_s}} \ge \alpha \eta, \\
  \; & \text{C4-1}:\; {|{{{\boldsymbol{\phi}}^{\mathrm{H}}} \widehat{\mathbf{M}} {\mathbf{w}}} |^2} \ge {\gamma _{th}} ,
\end{aligned}
\end{equation}
where ${\gamma _{th}} = \frac{1}{L}\big( {{2^{\frac{{\left( {1 - \alpha } \right)\eta LN{P_r}}}{B}}} - 1} \big)$.
If $\eta $ is feasible to \underline{\textbf{P3-1}}, then the optimal value of \underline{\textbf{P3}} satisfies ${\eta}^{\star} \ge \eta $; otherwise, ${\eta}^{\star} < \eta $. Thus, by solving \underline{\textbf{P3-2}} with different $\eta $ and applying the efficient bisection method, \underline{\textbf{P3}} can be solved.
It is noted that \underline{\textbf{P3-1}} is feasible if and only if $ {|{{{\boldsymbol{\phi}}^{\star \mathrm{H}}} \widehat{\mathbf{M}} {\mathbf{w}^{\star}}} |^2} \ge {\gamma _{th}}$, where ${\boldsymbol{\phi}}^{\star}$ and ${{\mathbf{w}}^{\star}}$ are the optimal solutions of the following optimization problem
\underline{\textbf{P3-2}:}
\begin{equation}\label{eq66}
\begin{split}
  \mathop {\max }\limits_{{\bf{w}},{\boldsymbol{\phi}}}
  \;& {|{{{\boldsymbol{\phi}}^{\mathrm{H}}} \widehat{\mathbf{M}} {\mathbf{w}}} |^2} \\
  {\text{s.t.}}
  \; & \text{C1}, \text{C2}, \text{C3-1}.
\end{split}
\end{equation}

It is challenging to directly obtain the optimal solution of \underline{\textbf{P3-2}} due to the non-convex constraints with respect to ${\mathbf{w}}$ and ${\boldsymbol{\theta}}$, and even worse, they are coupled together. Moreover, another difficulty lies in that the left hand side of the constraint $\text{C3-1}$ involves an expectation with respect to the random RIS-BD symbol $c$. To tackle this issue, we approximate it by its sample average, as elaborated in the following sample-average based approach.

\subsection{Sample-Average Based Approach}\label{s5-2}

Notice that the main difference between \underline{\textbf{P3-2}} in (\ref{eq66}) and \underline{\textbf{P2-1}} in (\ref{eq53}) is the additional constraint $\text{C3-1}$. It is not difficult to find that when $\alpha \eta$ is small enough, i.e., $\alpha \eta \le {\overline {EE} _{PT}}$, where
\begin{equation}\label{eq49}
  {\overline {EE} _{PT}} = \frac{{B{\mathbb{E}_c}\left[ {{{\log }_2}\Big(1 + {|({\widehat{\mathbf{h}}^{\mathrm{H}}} + {\boldsymbol{\phi}}_2^{\star \mathrm{H}} \widehat {\mathbf{M}}c){\mathbf{w}}_2^{\star}{|^2}}\Big)} \right]}}{{\mu {{\| {{\mathbf{w}}_2^{\star}} \|}^2} + {P_s}}},
\end{equation}
the solution to \underline{\textbf{P3-2}} can be obtained as ${\boldsymbol{\phi}_2 ^{\star}}$ and ${{\mathbf{w}}_2^{\star}}$ stated in Subsection \ref{s3-2}.
With the above discussion, the remaining task for solving \underline{\textbf{P3-2}} is to consider the case $\alpha \eta > {\overline {EE} _{PT}}$.

For the sample-average based approach, the expectation of the primary transmission rate in (\ref{eq20}) is approximated by its sample average. Specifically, we assume that $c_t, t=1,\cdots,T$ are $T$ independent realizations of $c$ following its distribution $c \sim\mathcal{CN}\left({0,1}\right)$. Then when $T$ is sufficiently large, based on the law of large numbers, ${\overline {EE} _{PT}}$ can be approximated as
\begin{equation}\label{eq50}
  {\overline {EE} _{PT}^{\text{samp}}} = \frac{1}{T}\sum\nolimits_{t = 1}^T {\frac{{B{{\log }_2}\Big(1 + |({{\widehat {\mathbf{h}}}^{\mathrm{H}}} + \phi {{_2^{\star}}^{\mathrm{H}}}\widehat {\mathbf{M}}{c_t}){\mathbf{w}}_2^{\star}{|^2}\Big)}}{{\mu {{\left\| {{\mathbf{w}}_2^{\star}} \right\|}^2} + {P_s}}}} .
\end{equation}
And then, for the case $\alpha \eta > {\overline {EE} _{PT}^{\text{samp}}}$, we need to consider the optimization problem
\underline{\textbf{P4}:}
\begin{equation}\label{eq54}
\begin{aligned}
  \mathop {\max }\limits_{{\bf{w}},{\boldsymbol{\phi}}}
  \;& {|{{{\boldsymbol{\phi}}^{\mathrm{H}}} \widehat{\mathbf{M}} {\mathbf{w}}} |^2} \\
  \text{s.t.}
  \; & \text{C1}:\;{\|{\bf{w}}\|^2}\le P_{\max}, \\
  \; & \text{C2}:\; \phi_n=e^{j \theta_n}, {\theta _n} \in [0, 2\pi) ,\forall n = 1,\cdots,N, \\
  \; & \text{C5}:\; \frac{1}{T}\sum\nolimits_{t = 1}^T {\frac{{B{{\log }_2}\Big(1 + |({{\widehat {\mathbf{h}}}^{\mathrm{H}}} + {\boldsymbol{\phi}}^{\mathrm{H}} \widehat {\mathbf{M}}{c_t}){\mathbf{w}}{|^2}\Big)}}{{\mu {{\left\| {{\mathbf{w}}} \right\|}^2} + {P_s}}}} \ge \alpha \eta. \\
\end{aligned}
\end{equation}
We may apply the AO algorithm to decouple \underline{\textbf{P4}} into several subproblems as follows.

\subsubsection{Transmit Beamforming Optimization of ${\mathbf{w}}$}

First, consider the transmit beamforming vector optimization problem with given phase shifts vector. For convenience, we define ${\mathbf{h}}_0^{\mathrm{H}} = {{\boldsymbol{\phi}} ^{\mathrm{H}}}\widehat {\mathbf{M}}$ and ${\mathbf{h}}_t^{\mathrm{H}} = {\widehat {\mathbf{h}}^{\mathrm{H}}} + {{\boldsymbol{\phi}} ^{\mathrm{H}}}\widehat {\mathbf{M}}{c_t}, t=1,\cdots,T$. In the $i$-th iteration, we obtain the optimal solution of ${\mathbf{w}}^{(i)}$ with given ${\boldsymbol{\phi}}^{(i-1)}$ by solving problem
\underline{\textbf{P4-1}:}
\begin{equation}\label{eq57}
\begin{aligned}
  \mathop {\max }\limits_{{\bf{w}}}
  \;& {|{{\mathbf{h}}_0^{\mathrm{H}} {\mathbf{w}}} |^2} \\
  \text{s.t.}
  \; & \text{C1}:\;{\|{\bf{w}}\|^2}\le P_{\max}, \\
  \; & \text{C5-1}:\; \frac{B}{T}\sum\nolimits_{t = 1}^T {{{\log }_2}\Big(1 + |{\mathbf{h}}_t^{\mathrm{H}}{\mathbf{w}}{|^2}\Big)}  \ge \alpha \eta \Big( {\mu {{\| {\mathbf{w}} \|}^2} + {P_s}} \Big). \\
\end{aligned}
\end{equation}
\underline{\textbf{P4-1}} is non-convex as the objective function is a convex function with respect to ${\bf{w}}$, the maximization of which is a non-convex optimization problem. Moreover, the constraint $\text{C5-1}$ is also non-convex, which is difficult to address.

To tackle this issue, we apply the SCA technique\cite{SCA} to transforms the non-convex optimization problem into a series of convex optimization problems, with guaranteed convergence to a Karush-Kuhn-Tucker (KKT) solution under some mild conditions. Specifically, consider the current $(k+1)$-th SCA iteration, in which the local point ${\bf{w}}^{(k)}$ is obtained in the previous iteration. By using the fact that any convex differentiable function is globally lower-bounded by its first-order Taylor expansion, we have the lower bound at this given local point ${\bf{w}}^{(k)}$:
\begin{equation}\label{eq72}
  {H_t}({\mathbf{w}}) =  |{\mathbf{h}}_t^{\mathrm{H}}{\mathbf{w}}{|^2}
  = {{\mathbf{w}}^{\mathrm{H}}}{{\mathbf{H}}_t}{\mathbf{w}}  
  \geq  {{\mathbf{w}}^{(k)}}^{\mathrm{H}}{{\mathbf{H}}_t}{{\mathbf{w}}^{(k)}} + 2\operatorname{Re} \Big[ {{{\mathbf{w}}^{(k)}}^{\mathrm{H}}{{\mathbf{H}}_t}({\mathbf{w}} - {{\mathbf{w}}^{(k)}})} \Big] 
  \triangleq  H_t^{{\text{lb}}}({\mathbf{w}}|{{\mathbf{w}}^{(k)}}),
\end{equation}
where ${{\bf{H}}_t} = {{\mathbf{h}}_t}{ {\mathbf{h}}_t^{\mathrm{H}}} $ and $t = 0,1,\cdots,T$. Therefore, by replacing with these global lower bounds in (\ref{eq72}), we have the following optimization problem
\underline{\textbf{P4-2}:}
\begin{equation}\label{eq62}
\begin{aligned}
  \mathop {\max }\limits_{{\bf{w}}}
  \; & H_0^{{\text{lb}}}({\mathbf{w}}|{{\mathbf{w}}^{(k)}}) \\
  \text{s.t.}
  \; & \text{C1-1}:\;{\|{\bf{w}}\|^2} - P_{\max} \le 0, \\
  \; & \text{C5-2}:\; \frac{B}{T}\sum\nolimits_{t = 1}^T {{{\log }_2}\Big(1 + H_t^{{\text{lb}}}({\mathbf{w}}|{{\mathbf{w}}^{(k)}})\Big)} \ge \alpha \eta \Big( {\mu {{\| {\mathbf{w}} \|}^2} + {P_s}} \Big). \\
\end{aligned}
\end{equation}
$\text{C1-1}$ and $\text{C5-2}$ are all convex sets, and the objective function of \underline{\textbf{P4-2}} is an affine function for any given local point ${\bf{w}}^{(k)}$. Thus \underline{\textbf{P4-2}} is a convex problem, which can be efficiently solved by standard convex optimization techniques or existing software tools such as CVX \cite{cvx}. Thanks to the global lower/upper bounds relationships in (\ref{eq72}), the optimal value of \underline{\textbf{P4-2}} gives at least a lower bound to that of \underline{\textbf{P4-1}}. By successively updating the local point ${\bf{w}}^{(k)}$ and solving \underline{\textbf{P4-2}}, a monotonically non-decreasing objective value of \underline{\textbf{P4-1}} can be obtained.

\subsubsection{Phase Optimization of ${\boldsymbol{\phi}}$}

Next, we focus on optimizing the phase shifts vector with fixed transmit beamforming vector.
In the $i$-th iteration, we obtain the optimal solution of ${\boldsymbol{\phi}}^{(i)}$ with given ${\mathbf{w}}^{(i)}$ by solving the following problem
\underline{\textbf{P4-3}:}
\begin{equation}\label{eq63}
\begin{aligned}
  \mathop {\max }\limits_{{\boldsymbol{\phi}}}
  \; & {|{{{\boldsymbol{\phi}}^{\mathrm{H}}} \widehat{\mathbf{M}} {\mathbf{w}}} |^2} \\
  \text{s.t.}
  \; & \text{C2}:\; \phi_n=e^{j \theta_n}, {\theta _n} \in [0, 2\pi) ,\forall n = 1,\cdots,N, \\
  \; & \text{C5}:\; \frac{1}{T}\sum\nolimits_{t = 1}^T {\frac{{B{{\log }_2}\Big(1 + |({{\widehat {\mathbf{h}}}^{\mathrm{H}}} + {\boldsymbol{\phi}}^{\mathrm{H}} \widehat {\mathbf{M}}{c_t}){\mathbf{w}}{|^2}\Big)}}{{\mu {{\left\| {{\mathbf{w}}} \right\|}^2} + {P_s}}}} \ge \alpha \eta. \\
\end{aligned}
\end{equation}
Before optimization ${\boldsymbol{\phi}}$, we first define $h = {\widehat {\mathbf{h}}^{\mathrm{H}}}{\mathbf{w}}$, ${{\boldsymbol{\beta }}_t} = \widehat {\mathbf{M}}{c_t}{\mathbf{w}}, t = 1, \cdots ,T$ for convenience. After that, we can rewrite $|({{\widehat {\mathbf{h}}}^{\mathrm{H}}} + {\boldsymbol{\phi}}^{\mathrm{H}} \widehat {\mathbf{M}}{c_t}){\mathbf{w}}{|^2}$ as
\begin{equation}\label{eq64}
\begin{aligned}
  & |({\widehat {\mathbf{h}}^{\mathrm{H}}} + {{\boldsymbol{\phi}} ^{\mathrm{H}}}\widehat {\mathbf{M}}{c_t}){\mathbf{w}}|^2
  = |h + {{\boldsymbol{\phi}} ^{\mathrm{H}}}{{\boldsymbol{\beta }}_t}|^2 = |{h^{\star}} + {\boldsymbol{\beta }}_t^{\mathrm{H}} {\boldsymbol{\phi}} |^2 
  = {( {h + {{\boldsymbol{\phi}} ^{\mathrm{H}}}{{\boldsymbol{\beta }}_t}} )}( {{h^{\star}} + {\boldsymbol{\beta }}_t^{\mathrm{H}} {\boldsymbol{\phi}} } ) \\
  = & h{h^{\star}} + h{\boldsymbol{\beta }}_t^{\mathrm{H}} {\boldsymbol{\phi}}  + {{\boldsymbol{\phi}} ^{\mathrm{H}}}{{\boldsymbol{\beta }}_t}{h^{\star}} + {{\boldsymbol{\phi}} ^{\mathrm{H}}}{{\boldsymbol{\beta }}_t}{\boldsymbol{\beta }}_t^{\mathrm{H}} {\boldsymbol{\phi}}  
  = {| h |^2} + 2\operatorname{Re} [ {{h}{\boldsymbol{\beta }}_t^{\mathrm{H}} {\boldsymbol{\phi}} } ] + {{\boldsymbol{\phi}} ^{\mathrm{H}}}{{\mathbf{B}}_t} {\boldsymbol{\phi}},
\end{aligned}
\end{equation}
where ${{\mathbf{B}}_t} = {\boldsymbol{\beta }}_t {\boldsymbol{\beta }}_t^{\mathrm{H}}$.
Recalling that ${\mathbf{x}} = {\widehat {\mathbf{M}}{\mathbf{w}}} $ defined before, \underline{\textbf{P4-3}} can be reformulated as
\underline{\textbf{P4-4}:}
\begin{equation}\label{eq70}
\begin{aligned}
  \mathop {\max }\limits_{{\boldsymbol{\phi}}}
  \; & |{{\boldsymbol{\phi}}^{\mathrm{H}}}{\mathbf{x}}|^2 \\
  \text{s.t.}
  \; & \text{C2-1}:\; | {{\phi _n}} | = 1,n = 1, \cdots ,N, \\
  \; & \text{C5-3}:\;  \alpha \eta ( {\mu {{\| {\mathbf{w}} \|}^2} + {P_s}} ) - \frac{B}{T}\sum\nolimits_{t = 1}^T {{{\log }_2}\Big(1 + {| h |^2} + 2\operatorname{Re} [ {{h}{\boldsymbol{\beta }}_t^{\mathrm{H}} {\boldsymbol{\phi}} } ] + {{\boldsymbol{\phi}} ^{\mathrm{H}}}{{\mathbf{B}}_t} {\boldsymbol{\phi}}\Big)}\le 0, \\
\end{aligned}
\end{equation}
where ${\phi_n}$ is the $n$-th elements of ${\boldsymbol{\phi}}$. To handle the non-convex constraint $\text{C2-1}$, we can loosen this constraint and rewrite \underline{\textbf{P4-4}} as
\underline{\textbf{P4-5}:}
\begin{equation}\label{eq73}
\begin{aligned}
  \mathop {\max }\limits_{{\boldsymbol{\phi}}}
  \; & |{{\boldsymbol{\phi}}^{\mathrm{H}}}{\mathbf{x}}|^2   \\
  \text{s.t.}
  \; & \text{C2-2}:\;  | {{\phi_n}} | \le 1,n = 1, \cdots ,N, \\
  \; & \text{C5-3}:\; \alpha \eta ( {\mu {{\| {\mathbf{w}} \|}^2} + {P_s}} ) - \frac{B}{T}\sum\nolimits_{t = 1}^T {{{\log }_2}\Big(1 + {| h |^2} + 2\operatorname{Re} [ {{h}{\boldsymbol{\beta }}_t^{\mathrm{H}} {\boldsymbol{\phi}} } ] + {{\boldsymbol{\phi}} ^{\mathrm{H}}}{{\mathbf{B}}_t} {\boldsymbol{\phi}}\Big)} \le 0. \\
\end{aligned}
\end{equation}
In order to solve this non-convex problem \underline{\textbf{P4-5}}, we also utilize the SCA technique. Defining $|{{\boldsymbol{\phi}} ^{\mathrm{H}}}{\mathbf{x}}|^2 = {{\boldsymbol{\phi}} ^{\mathrm{H}}}{\mathbf{x}}{{\mathbf{x}}^{\mathrm{H}}}{\boldsymbol{\phi}} = {{\boldsymbol{\phi}} ^{\mathrm{H}}}{{\mathbf{B}}_0} {\boldsymbol{\phi}} $, we have the global lower bound at a given local point ${\boldsymbol{\phi}}^{(k)}$ :
\begin{equation}\label{eq74}
  {B_t}\left( {\boldsymbol{\phi}}  \right)
  =  {{\boldsymbol{\phi}} ^{\mathrm{H}}}{{\mathbf{B}}_t}{\boldsymbol{\phi}} 
  \geq  {{\boldsymbol{\phi}}^{(k)}}^{\mathrm{H}}{{\mathbf{B}}_t}{{\boldsymbol{\phi}}^{(k)}} + 2\operatorname{Re} \Big[ {{{\boldsymbol{\phi}}^{(k)}}^{\mathrm{H}}{{\mathbf{B}}_t}({\boldsymbol{\phi}} - {{\boldsymbol{\phi}}^{(k)}})} \Big] 
  \triangleq  B_t^{{\text{lb}}}({\boldsymbol{\phi}}|{{\boldsymbol{\phi}}^{(k)}}),
\end{equation}
for all $t=0,1,\cdots,T$, where the local point ${\boldsymbol{\phi}}^{(k)}$ is obtained in the previous $k$-th iteration.
With the above approximations, the non-convex problem \underline{\textbf{P4-5}} can be formulated in the following convex problem
\underline{\textbf{P4-6}:}
\begin{equation}\label{eq75}
\begin{aligned}
  \mathop {\max }\limits_{{\boldsymbol{\phi}}}
  \; & B_0^{{\text{lb}}}({\boldsymbol{\phi}}|{{\boldsymbol{\phi}}^{(k)}}) \\
  \text{s.t.}
  \; & \text{C2-2}: \; | {{\phi_n}} | \le 1,n = 1, \cdots ,N, \\
  \; & \text{C5-4}: \; \alpha \eta \Big( {\mu {{\| {\mathbf{w}} \|}^2} + {P_s}} \Big) - \frac{B}{T}\sum\nolimits_{t = 1}^T {{{\log }_2}\Big(\widetilde h  + 2\operatorname{Re} [ {{h}{\boldsymbol{\beta }}_t^{\mathrm{H}} {\boldsymbol{\phi}} } ] + B_t^{{\text{lb}}}({\boldsymbol{\phi}}|{{\boldsymbol{\phi}}^{(k)}})\Big)} \le 0, \\
\end{aligned}
\end{equation}
where $\widetilde h = 1 + |h{|^2}$.
Therefore, \underline{\textbf{P4-3}} can be solved by applying the SCA technique, where the approximated convex problem \underline{\textbf{P4-6}} is solved at each iteration by CVX \cite{cvx} or other techniques. 
\begin{algorithm}[H]
  \renewcommand{\algorithmicrequire}{\textbf{Input}:}
  \renewcommand{\algorithmicensure}{\textbf{Output}:}
  \renewcommand{\algorithmicif}{\quad \textbf{If}}
  \renewcommand{\algorithmicendif}{\quad \textbf{End if}}
  \renewcommand{\algorithmicelse}{\quad \textbf{Else}}
  \renewcommand{\algorithmicreturn}{\textbf{Return}}
  \caption{Bisection Method for solving \underline{\textbf{P3}}}
  \label{alg2}
  \begin{algorithmic}[1]
    \STATE \textbf{Initialization}: ${\eta_{L}}=0$, $\eta_ {U}$ to a sufficiently large number;
    \STATE \textbf{Repeat}
      \STATE \quad Let $\eta = {\left({{\eta_L}+{\eta_U}} \right)/2}$;
      \IF {$\alpha \eta > {\overline {EE} _{PT}^{\text{samp}}}$}
        \STATE \quad With the given $\eta$, solve \underline{\textbf{P4}} with Algorithm \ref{alg3}. Denote the solutions as ${{\bf{w}}^{\star}}$ and ${\boldsymbol{\phi}}^{\star}$;
      \ELSE
        \STATE \quad Set ${{\bf{w}}^{\star}} = {{\bf{w}}_2^{\star}}$ and ${\boldsymbol{\phi}}^{\star} = {\boldsymbol{\phi}}_2^{\star}$;
      \ENDIF
      \IF {${|{{{\boldsymbol{\phi}}^{\star \mathrm{H}}} \widehat{\mathbf{M}} {\mathbf{w}^{\star}}} |^2} \ge \gamma_{th}$}
        \STATE \quad Set ${\eta_L}=\eta$;
      \ELSE
        \STATE \quad Set ${\eta_U}=\eta$;
      \ENDIF
    \STATE \textbf{Until} ${\frac{{{\eta _U} - {\eta _L}}}{{{\eta _L}}}} <\epsilon $;
    \RETURN ${\eta^{\star}}=\eta$, ${{\bf{w}}^{\star}}$ and ${\boldsymbol{\phi}}^{\star}$.
  \end{algorithmic}
\end{algorithm}
\begin{algorithm}[H]
  \renewcommand{\algorithmicrequire}{\textbf{Input}:}
  \renewcommand{\algorithmicensure}{\textbf{Output}:}
  \renewcommand{\algorithmicif}{\quad \textbf{If}}
  \renewcommand{\algorithmicendif}{\quad \textbf{End if}}
  \renewcommand{\algorithmicelse}{\quad \textbf{Else}}
  \renewcommand{\algorithmicreturn}{\textbf{Return}}
  \caption{The overall algorithm for solving \underline{\textbf{P4}}}
  \label{alg3}
  \begin{algorithmic}[1]
    \STATE \textbf{Initialization}: Feasible values of $\{ {{\bf{w}}^{(0)}} , {{\boldsymbol{\phi}}^{(0)}} \}$ and $i=0$;
    \STATE \textbf{Repeat}
      \STATE \quad Update $i=i+1$;
      \STATE \quad \textbf{Initialization}: Global value of ${{\bf{w}}_g^{(0)}} = {{\bf{w}}^{(i-1)}}$, $k=0$;
      \STATE \quad \textbf{Repeat}
      \STATE \qquad With given ${{\bf{w}}_g^{(k)}}$, solve the convex optimization problem \underline{\textbf{P4-2}}, and denote the optimal solution as ${{\bf{w}}_g^{*\left(k\right)}}$;
      \STATE \qquad Update ${{\bf{w}}_g^{\left(k+1\right)}}={{\bf{w}}_g^{*\left(k\right)}}$;
      \STATE \qquad Update $k=k+1$;
      \STATE \quad \textbf{Until} The fractional increase of the objective value of \underline{\textbf{P4-2}} is below a certain threshold $\kappa_1 $ ;
      \STATE \quad Update ${{\bf{w}}^{(i)}} = {{\bf{w}}_g^{\left(k\right)}}$;
      \STATE \quad \textbf{Initialization}: Global value of ${{\boldsymbol{\phi}}_g^{(0)}} = {{\boldsymbol{\phi}}^{(i-1)}}$, $k=0$;
      \STATE \quad \textbf{Repeat}
      \STATE \qquad With given ${{\boldsymbol{\phi}}_g^{(0)}}$, solve the convex optimization problem \underline{\textbf{P4-6}}, and denote the optimal solution as ${{\boldsymbol{\phi}}_g^{*\left(k\right)}}$;
      \STATE \qquad Update ${{\boldsymbol{\phi}}_g^{\left(k+1\right)}}={{\boldsymbol{\phi}}_g^{*\left(k\right)}}$;
      \STATE \qquad Update $k=k+1$;
      \STATE \quad \textbf{Until} The fractional increase of the objective value of \underline{\textbf{P4-6}} is below a certain threshold $\kappa_2 $ ;
      \STATE \quad Update ${\boldsymbol{\phi}}^{(i)}={{\boldsymbol{\phi}}_g^{\left(k\right)}}$;
    \STATE \textbf{Until} The fractional increase of the objective value of \underline{\textbf{P4}} is below a certain threshold $\kappa_3$;
    \RETURN ${{\bf{w}}^{\star}}={{\bf{w}}^{\left(i\right)}}$ and ${{\boldsymbol{\phi}}^{\star}}={{\boldsymbol{\phi}}^{(i)}}$.
  \end{algorithmic}
\end{algorithm}
As such, the original problem \underline{\textbf{P3}} can be solved based on the bisection method, which is summarized in Algorithm \ref{alg2}, together with Algorithm \ref{alg3}.

\section{Simulation Results}\label{s6}

In this section, simulation results are presented to demonstrate the effectiveness of our proposed algorithm. The main system parameters are listed in Table \ref{table_1} unless otherwise specified. We assume that the PT is equipped with $M=4$ elements, with the adjacent element separated by half wavelength. Moreover, the RIS-BD is distributed in a $N = 8 \times 8 = 64$ rectangular array on the xOz plane. Both the RIS-BD and receiver have equal distance with the PT, which is $d_0 = 300{\text{m}}$, and the angle formed by PT-to-receiver and PT-to-RIS-BD line segments is $\theta$. The height of the PT and the RIS-BD are assumed as $h_{PT} = 50{\text{m}}$ and $h_{RIS-BD} = 30{\text{m}}$. As such, without loss of generality, the coordinate of the PT, receiver and RIS-BD locations can be represented as $\left(0,0,h_{PT}\right)$, $\left(0,d_0,0\right)$ and $\left(d_0 \cos{\theta},d_0 \sin{\theta},h_{RIS-BD}\right) $, respectively.
Therefore, the distance between RIS-BD and receiver projected to the xOy plane can be represented as ${d_1} = 2{d_0} \sin \left( {\frac{{{\theta }}}{2}} \right)$. All links are assumed to be the Rician fading channels, with the Rician K-factor $K_1=2\text{dB}$ in the PT-receiver link and $K_2=10\text{dB}$ in the PT-to-RIS-BD and RIS-BD-to-receiver links.
Moreover, the large-scale path loss is modeled as $\beta = {\beta _0}{d^{ -\alpha }}$, where ${\beta _0} = {\left( {\frac{\lambda }{{4\pi }}} \right)^2}$ is the reference channel gain, $\lambda = {c}/{f_c}$ is the wavelength, $d$ is the distance between the corresponding devices and $\alpha$ denotes the path loss exponent.
  \begin{table}[H]
    \vspace*{-.5\baselineskip}
    \caption{System Parameters}
    \vspace*{-.5\baselineskip}
    \label{table_1}
    \centering
    \small
    \begin{tabular}{cc}
    \hline
    \hline
    \noalign{\smallskip}
    Parameters & Values\\
    \noalign{\smallskip}
    \hline
    \noalign{\smallskip}
    Carrier frequency $f_c$ & $3.5{\text{GHz}}$ \\
    \noalign{\smallskip}
    Channel bandwidth $B$ & $1 \text{MHz}$ \\
    \noalign{\smallskip}
    Path-loss exponent of the PT-receiver link ${\alpha_{TR}}$ & $2.7$ \\
    \noalign{\smallskip}
    Path-loss exponent of the PT-RIS-BD link ${\alpha_{TS}}$ & $2.7$ \\
    \noalign{\smallskip}
    Path-loss exponent of the RIS-BD-receiver link ${\alpha_{SR}}$ & $2.1$ \\
    \noalign{\smallskip}
    Noise power ${\sigma ^2}$ & $-114 \text{dBm}$ \\
    \noalign{\smallskip}
    Reflection coefficient $\rho$ & $1$ \\
    \noalign{\smallskip}
    Inefficiency of the power amplifier of the PT $\mu$ & $1.2$ \\
    \noalign{\smallskip}
    Circuit power of the PT $P_s$ & $39 \text{dBm}$ \\
    \noalign{\smallskip}
    Circuit power of each RIS-BD element $P_r$ & $10 \text{dBm}$ \\
    \noalign{\smallskip}
    Ratio between symbol duration of the RIS-BD to the PT $L$ & $128$ \\
    \noalign{\smallskip}
    \hline
    \end{tabular}
    \vspace*{-.5\baselineskip}
\end{table} 

For $P_{\max} = 40 \text{dBm}$, Fig. \ref{fig_2} plots the outer boundaries of the EE regions obtained by our proposed sample-average based approach, which are labelled as the ``EE-max design". Note that each point of these boundaries corresponds to an EE pair by varying the EE profile as analyzed in Subsection \ref{s5-1}. It is observed from Fig. \ref{fig_2} that there exists a non-trivial EE trade-off between PT and RIS-BD, i.e., a sacrifice of the EE for the PT would lead to considerable improvement to that of the RIS-BD, and vice versa. Moreover, these EE regions exhibit convex characteristic, which is different from the observation in the single-antenna BD based PSR case \cite{10039158}.
We also consider the so-called ``PT-rate-max design" as a benchmark comparison, where the transmit beamforming is designed to maximize the primary communication rate, rather than the EE. Note that the ``RIS-BD-rate-max design" is equivalent to maximizing the individual EE of the RIS-BD as considered in Subsection \ref{s3-2}, which is omitted in the figures. Therefore, another observation from Fig. \ref{fig_2} is that the resulting EE pairs by the ``PT-rate-max design" lie in the interior of the achievable EE region. This implies that simply maximizing the communication rate is strictly energy-inefficient, which demonstrates the importance of considering EE metrics deliberately in SR systems.

By comparing the different curves under different $\theta$ in Fig. \ref{fig_2:a}, we can see that when $\theta$ is relatively small, i.e., $\theta \le 15^\circ$, which means that the RIS-BD is very close to the receiver, the achievable EE region enlarges as $\theta$ decreases. While when $\theta > 15^\circ$, where the RIS-BD is relatively far from the receiver, instead of completely decreasing with the increase of $\theta$, the EE of the PT and RIS-BD show the opposite trend as $\theta$ goes larger. This is expected since a relatively small $\theta$ value implies that the PT-to-receiver channel $\bf{h}$ and PT-to-RIS-BD channel $\bf{G}$ are highly correlated, where the mutual benefit of SR communication systems is more significant. However, a larger $\theta$ value implies that $\bf{h}$ and $\bf{G}$ are less correlated, which makes it more challenging to find the optimal transmit beamforming so that the significant power can be simultaneously directed towards both PT and RIS-BD, where the EE trade-off between PT and RIS-BD is more prominent.
Fig. \ref{fig_2:b} plots the outer boundaries of achievable EE regions and the resulting EE pairs with the ``PT-rate-max design" for two different circuit power levels $P_s = 0$ and $39\text{dBm}$, with $P_{\max} = 40 \text{dBm}$ and $\theta = 20^\circ $. It is observed from Fig. \ref{fig_2:b} that as the circuit power consumption reduces, the achievable EE region enlarges, as expected. Moreover, unlike the EE regions for $P_s = 39\text{dBm}$ in Fig. \ref{fig_2:a}, which are convex, the achievable EE region for the extreme case $P_s = 0$ is concave in comparison. This is expected due to the fact that the power consumed by the PT plunges sharply with $P_s$ going to zero, which contributes to the surge of the EE of the PT, while has no effect on the EE of RIS-BD.
\vspace*{-.5\baselineskip}
\begin{figure}[H]
  \centering
  \subfigure[with different $\theta$]{
  \label{fig_2:a}
  \includegraphics[width = .46\textwidth]{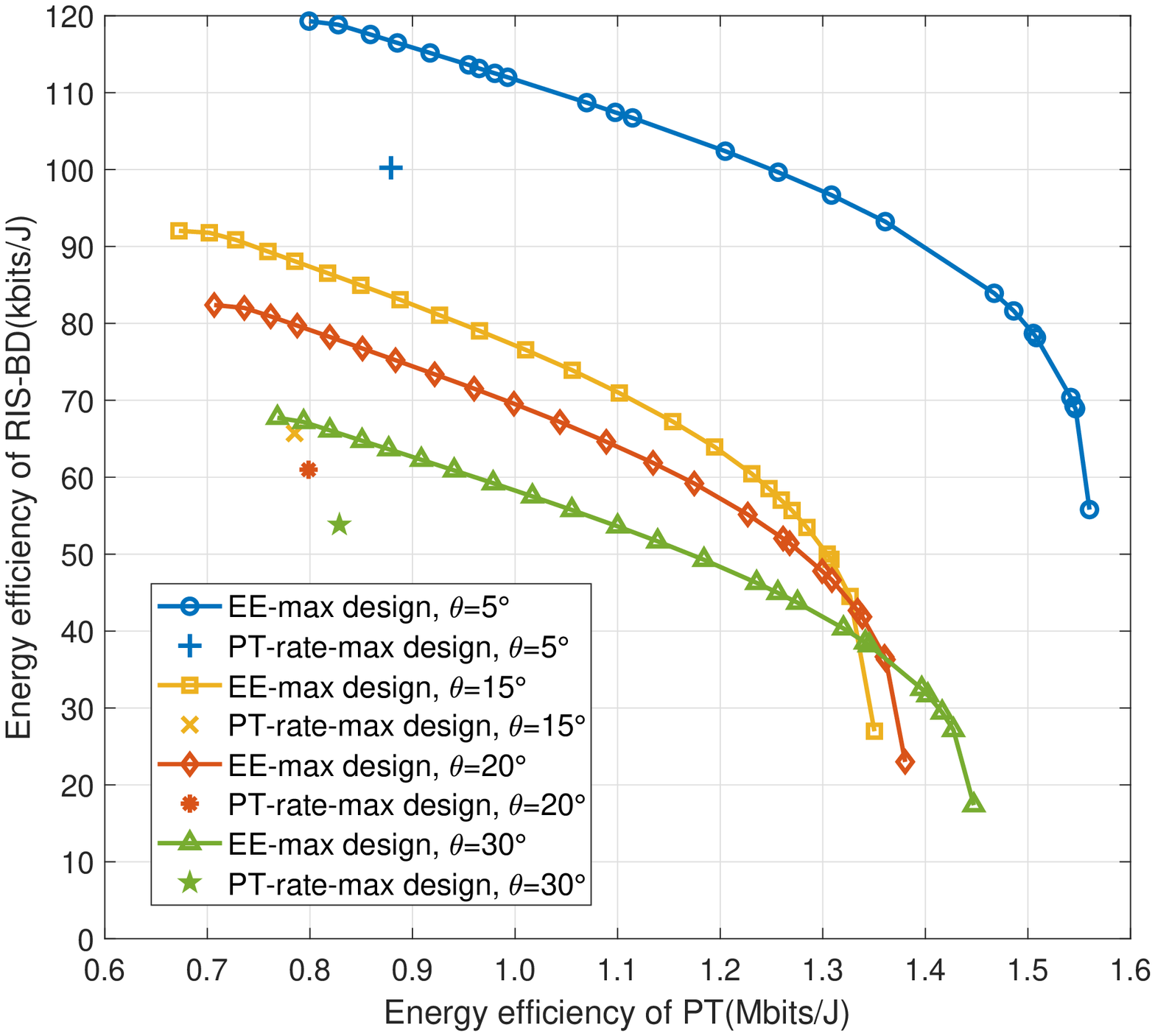}}
  \subfigure[with $P_s = 0$ and $P_s = 39\text{dBm}$]{
  \label{fig_2:b}
  \includegraphics[width = .51\textwidth]{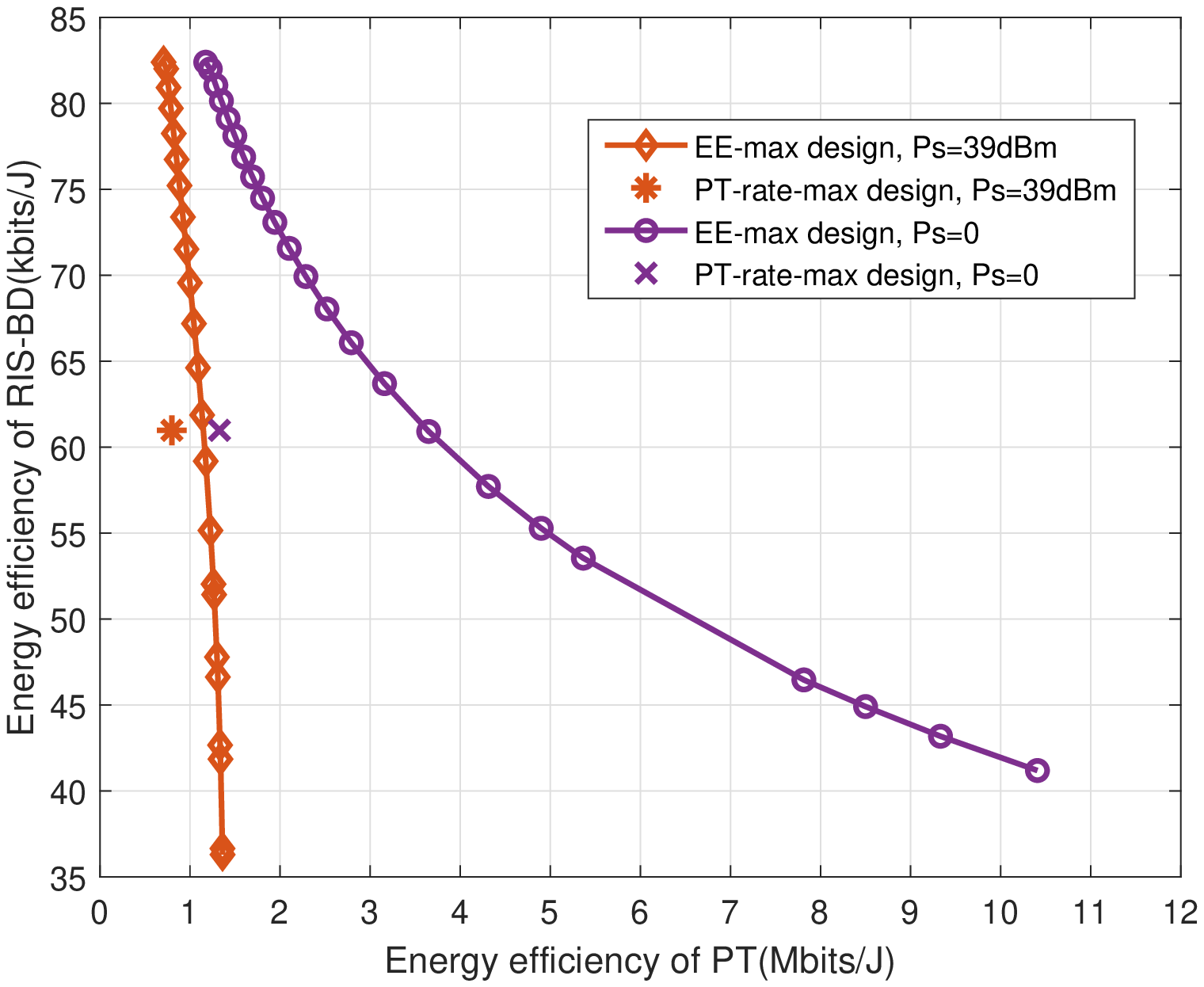}}
  \vspace*{-.5\baselineskip}
  \caption{EE region of the RIS-BD-based SR system.}
  \vspace*{-.5\baselineskip}
  \label{fig_2}
\end{figure}

Fig. \ref{fig_10} plots the outer boundaries of achievable EE regions for five different maximum transmit power $P_{\max}$ with $P_s = 39\text{dBm}$ and $\theta = 20^\circ$.
It is observed from the two subfigures in Fig. \ref{fig_10} that the achievable EE region does not enlarge indefinitely with the increase of $P_{\max}$. For example, when $P_{\max} < 38 \text{dBm}$ in Fig. \ref{fig_10:a}, the achievable EE region enlarges as $P_{\max}$ increases, while when $P_{\max} \ge 38 \text{dBm}$ in Fig. \ref{fig_10:b}, there is little change in the achievable EE region even if $P_{\max}$ increases. This is expected since the definitions of EEs of the PT and RIS-BD as in (\ref{eq15}) and (\ref{eq25}) imply that increasing the transmit power do not lead to monotonically increasing of EE.
\vspace*{-.5\baselineskip}
\begin{figure}[H]
  \centering
  \subfigure[$P_{\max}$ is relatively small]{
  \label{fig_10:a}
  \includegraphics[width = .46\textwidth]{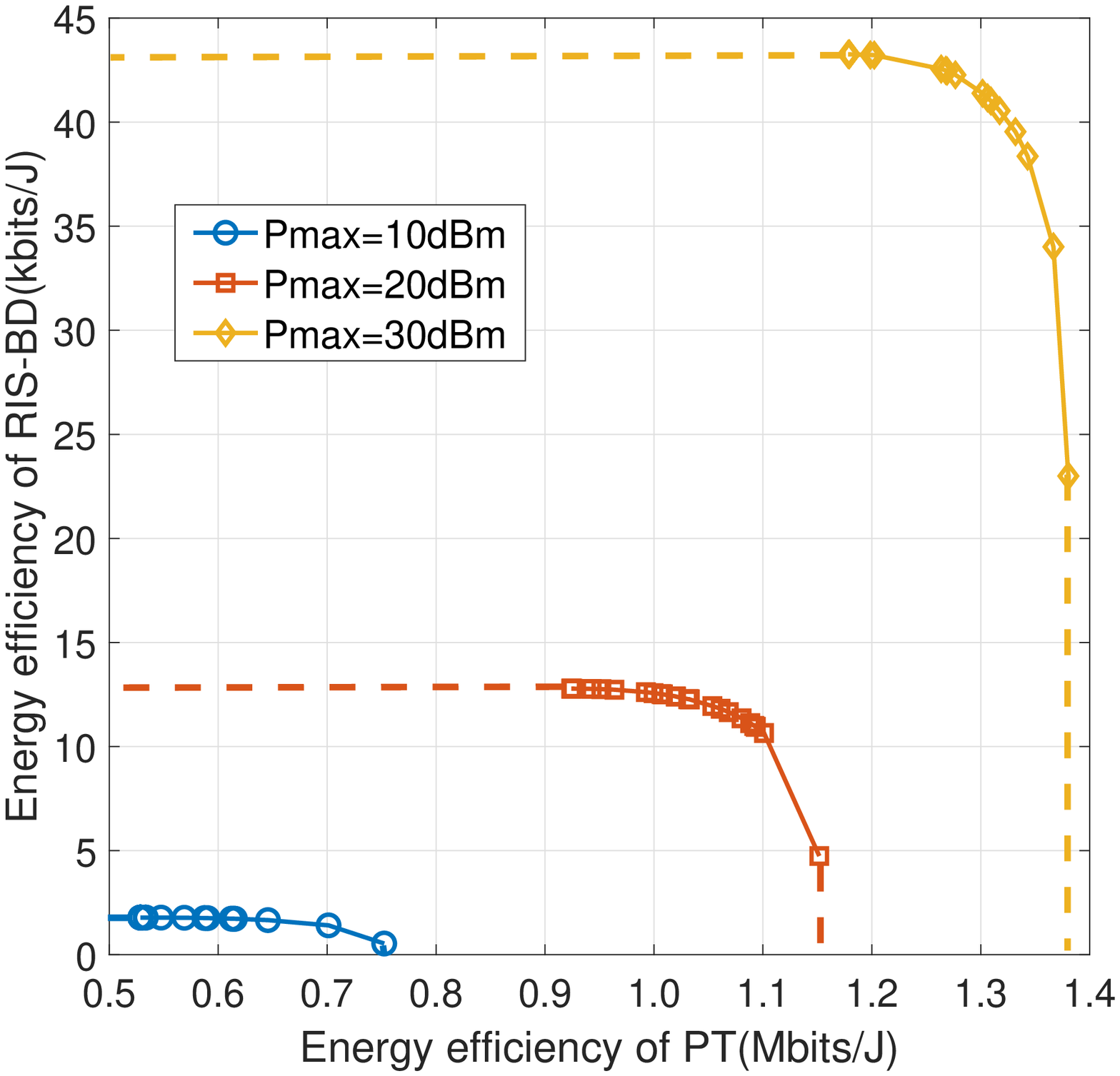}}
  \subfigure[$P_{\max}$ is large]{
  \label{fig_10:b}
  \includegraphics[width = .49\textwidth]{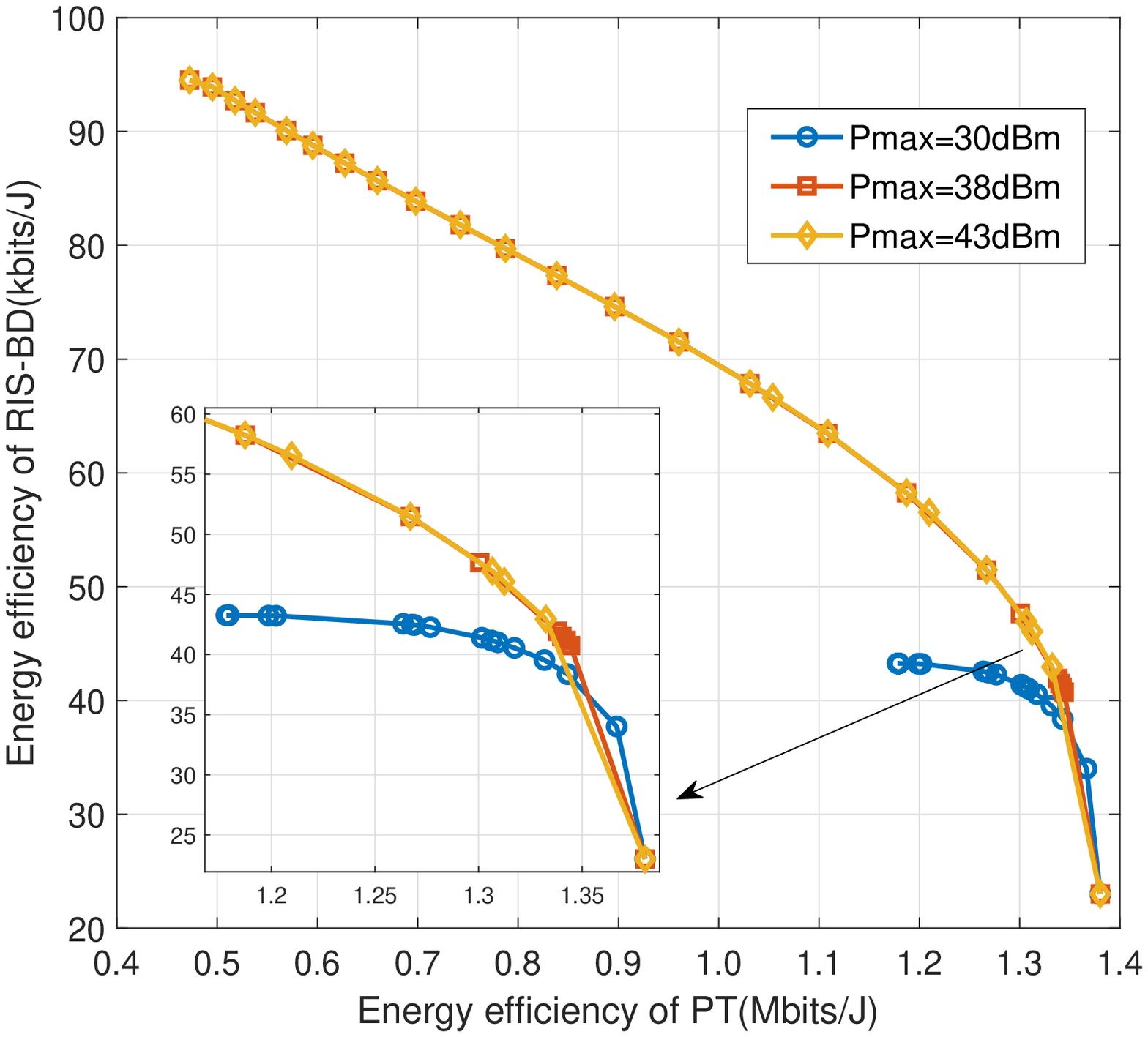}}
  \vspace*{-.5\baselineskip}
  \caption{EE region of the RIS-BD-based SR system with different $P_{\max}$.}
  \vspace*{-.5\baselineskip}
  \label{fig_10}
\end{figure}

\section{Conclusion}\label{s7}
This paper studied the EE trade-off of the active and passive communications. The maximum individual EE of the PT and RIS-BD, and the ir asymptotic closed-forms are derived, which reveal that there exist non-trivial trade-off between these two EEs. By applying the sample-average based bisection approach together with AO algorithm and SCA technique, an optimization problem is formulated and efficiently solved to characterize the Pareto boundary of the EE region. Finally, simulation results have validated our theoretical analysis and demonstrated the effectiveness of the proposed algorithms.


%
\appendices
\section{Proof of Proposition \ref{Pro1}}\label{App3}
With ${\bf{w}}=\sqrt p {\bf{v}}$ and ${\mathbf{F}} = \widehat {\mathbf{h}}{\widehat {\mathbf{h}}^{\mathrm{H}}} + {\widehat {\mathbf{M}}^{\mathrm{H}}}\phi {\phi ^{\mathrm{H}}}\widehat {\mathbf{M}}$, we can rewrite the objective function in \underline{\textbf{P1-1}} as
\begin{equation}\label{eq51}
  {EE}_{PT}^{{\mathrm{ub}}} = \frac{{B{{\log }_2}(1 + p{{\mathbf{v}}^{\mathrm{H}}}{\mathbf{Fv}})}}{{\mu p + {P_s}}}.
\end{equation}
Since the optimum solution can be attained at each iteration, we have
\begin{equation}\label{eq56}
  \begin{aligned}
    {EE}_{PT}^{{\mathrm{ub}}( {i + 1} )} & = \frac{{B{{\log }_2}(1 + {p^{\left( {i + 1} \right)}}{{\mathbf{v}}^{\left( {i + 1} \right)}}^{\mathrm{H}}{{\mathbf{F}}^{\left( {i + 1} \right)}}{{\mathbf{v}}^{\left( {i + 1} \right)}})}}{{\mu {p^{\left( {i + 1} \right)}} + {P_s}}} \ge \frac{{B{{\log }_2}(1 + {p^{\left( {i + 1} \right)}}{{\mathbf{v}}^{\left( {i + 1} \right)}}^{\mathrm{H}}{{\mathbf{F}}^{\left( i \right)}}{{\mathbf{v}}^{\left( {i + 1} \right)}})}}{{\mu {p^{\left( {i + 1} \right)}} + {P_s}}} \\
    & \ge \frac{{B{{\log }_2}(1 + {p^{\left( i \right)}}{{\mathbf{v}}^{\left( {i + 1} \right)}}^{\mathrm{H}}{{\mathbf{F}}^{\left( i \right)}}{{\mathbf{v}}^{\left( {i + 1} \right)}})}}{{\mu {p^{\left( i \right)}} + {P_s}}} \ge \frac{{B{{\log }_2}(1 + {p^{\left( i \right)}}{{\mathbf{v}}^{\left( i \right)}}^{\mathrm{H}}{{\mathbf{F}}^{\left( i \right)}}{{\mathbf{v}}^{\left( i \right)}})}}{{\mu {p^{\left( i \right)}} + {P_s}}} = EE{_{PT}^{{\mathrm{ub}}\left( i \right)}},
  \end{aligned}
\end{equation}
where $i$ is the iteration index in Algorithm \ref{alg1}. Thus, the relationship (\ref{eq56}) shows that the objective value of \underline{\textbf{P1-1}} obtained in Algorithm \ref{alg1} is monotonically non-decreasing after each iteration, and hence converges to a finite limit.
This completes the proof of Proposition \ref{Pro1}.

\section{Proof of Lemma \ref{Lem1}}\label{App4}
Based on (\ref{eq37}), we have
\begin{equation}\label{eq60}
  \begin{aligned}
    & {\mathbf{D}} = {\mathbf{h}}{{\mathbf{h}}^{\mathrm{H}}} + \rho \sum\nolimits_{n = 1}^N {{{\left| {{f_n}} \right|}^2}{\mathbf{g}}_n^* {\mathbf{g}}_n^\mathrm{T}}  + \rho \sum\nolimits_{n \ne n'}^N {\sum\nolimits_{n' = 1}^N {{f_n}f_{n'}^* {e^{j\left( {{\theta _{n'}} - {\theta _n}} \right)}}{\mathbf{g}}_n^* {\mathbf{g}}_{n'}^\mathrm{T}} }  \\
    & \mathop  \to \limits^{(a)} M\big( {{\beta _{TR}} + \rho \sum\nolimits_{n = 1}^N {{{\left| {{f_n}} \right|}^2}{\beta _{TS}}} } \big){{\mathbf{I}}_M}  \mathop  \to \limits^{(b)} M\left( {{\beta _{TR}} + \rho N{\beta _{SR}}{\beta _{TS}}} \right){{\mathbf{I}}_M},
  \end{aligned}
\end{equation}
where $(a)$ results from the law of large numbers and the assumption of i.i.d. Rayleigh fading channels that for $M \gg 1$, we have $\frac{1}{M}{\mathbf{h}}{{\mathbf{h}}^{\mathrm{H}}} \to {\beta _{TR}}{{\mathbf{I}}_M}$, $\frac{1}{M}{\mathbf{g}}_n^* {\mathbf{g}}_n^\mathrm{T} \to {\beta _{TS}}{{\mathbf{I}}_M}$ and $\frac{1}{M}{\mathbf{g}}_n^* {\mathbf{g}}_{n'}^\mathrm{T} \to {{\mathbf{O}}_M}$ for $n \ne n'$, and $(b)$ holds since $\frac{1}{N}\sum\nolimits_{n = 1}^N {{| {{f_n}}|^2}} \to \mathbb{E}\left[ {{|{{f_n}}|^2}} \right] = {\beta _{SR}}$ for $N \gg 1$. 
Therefore, $EE_{PT}^{\mathrm{ub}}$ is now independent of the phase shift of the RIS-BD, and it reduces to
\begin{equation}\label{eq38}
  EE_{PT}^{\mathrm{ub}} \to\frac{{B{{\log }_2}\Big(1 + \frac{p}{{{\sigma ^2}}}M \left( {{\beta _{TR}} + \rho N{\beta _{SR}}{\beta _{TS}}} \right) {{\mathbf{v}}^{\mathrm{H}} {\mathbf{v}}}\Big)}}{{\mu p + {P_s}}}.
\end{equation}
Due to the fact that ${\mathbf{v}}^{\mathrm{H}} {\mathbf{v}} = \left\| {\mathbf{v}} \right\|^2 = 1$, the proof is thus completed.

\section{Proof of Lemma \ref{Lem2}}\label{App1}
Firstly, we have the following channel assumptions:
\begin{equation}\label{eq68}
  \begin{aligned}
    {f_n} & \sim \mathcal{CN}\Big( {\sqrt {{\frac{{\beta _{SR}}{{K_3}}}{{{K_3} + 1}}}} {f_{LoS,n}},\frac{{{\beta _{SR}}}}{{ {{K_3} + 1} }}\sum\nolimits_{r = 1}^N {|{{\left( {{{\mathbf{R}}_{SR}}} \right)}_{n,r}}|} } \Big), \\
    {g_n} & \sim \mathcal{CN}\Big( {\sqrt {{\frac{{\beta _{TS}}{{K_2}}}{{{K_2} + 1}}}} {g_{LoS,n}},\frac{{{\beta _{TS}}}}{{ {{K_2} + 1} }}\sum\nolimits_{r = 1}^N {|{{\left( {{{\mathbf{R}}_{TS}}} \right)}_{n,r}}|} } \Big).
  \end{aligned}
\end{equation}
Then according to \cite{9357969}, the means of $|f_n|$ and $|g_n|$ are
\begin{equation}\label{eq69}
  \begin{aligned}
    {\mu _{f,n}} = & \sqrt {\frac{{\beta _{SR}} \pi {\sum\nolimits_{r = 1}^N {\left| {{{({{\mathbf{R}}_{SR}})}_{n,r}}} \right|}} }{{4({K_3} + 1)}}}  \times {\mathrm{L}_{\frac{1}{2}}}\Big( { - \frac{{{K_3}}}{{\sum\nolimits_{r = 1}^N | {{({{\mathbf{R}}_{SR}})}_{n,r}}|}}} \Big), \\
    {\mu _{g,n}} = & \sqrt {\frac{{\beta _{TS}} \pi {\sum\nolimits_{r = 1}^N | {{({{\mathbf{R}}_{TS}})}_{n,r}}|}}{{4({K_2} + 1)}}}  \times {\mathrm{L}_{\frac{1}{2}}}\Big( { - \frac{{{K_2}}}{{\sum\nolimits_{r = 1}^N | {{({{\mathbf{R}}_{TS}})}_{n,r}}|}}} \Big).
  \end{aligned}
\end{equation}
According to the property of the product of independent random variables, we have $\mathbb{E} \left[ |{f_n}||{g_n}| \right] = {\mu _{f,n}}{\mu _{g,n}}$.
Based on the central limit theorem (CLT), it can be shown that $X = \sum\nolimits_{n = 1}^N {|{f_n}||{g_n}|}$ is the sum of $N$ independently distributed random variables, which follows the Gaussian distribution for $N \gg 1$. Therefore, the mean is $\mathbb{E} \left[ X \right] = \sum\nolimits_{n = 1}^N {{\mu _{f,n}}{\mu _{g,n}}}$, and the proof is thus completed.

\section{Proof of Lemma \ref{Lem3}}\label{App2}
It is not difficult to see from (\ref{eq14}) that the random variable $Y$ is the sum of $M$ random variables $\left|\sum\nolimits_{n=1}^N {{f_n^* } {g_{n m}} e^{j\theta_n}}\right|^2$. Take $\forall m$ for an example, we define $T_m = {\sum\nolimits_{n = 1}^N {f_n^* {g_{n{m}}}{e^{j{\theta _n}}}} }$.
With similar channel assumptions as (\ref{eq68}), and defining ${\mu _{f,n}} = \sqrt {\frac{{{\beta _{SR}}{K_3}}}{{{K_3} + 1}}} {f_{LoS,n}}, {\mu _{g,n{m}}} = \sqrt {\frac{{{\beta _{TS}}{K_2}}}{{{K_2} + 1}}} {g_{LoS,n{m}}}$, and $\sigma _{f,n}^2 = \frac{{{\beta _{SR}}}}{{{K_3} + 1}}\sum\nolimits_{r = 1}^N {|{{\left( {{{\mathbf{R}}_{SR}}} \right)}_{n,r}}|}$, $\sigma _{g,n}^2 = \frac{{{\beta _{TS}}}}{{{K_2} + 1}}\sum\nolimits_{r = 1}^N {|{{\left( {{{\mathbf{R}}_{TS}}} \right)}_{n,r}}|} $ for convenience, we have
\begin{equation}\label{eq80}
  \mathbb{E}\left[ {f_n^* {g_{nm}}} \right] = {\mu _{n,m}} = \mu _{f,n}^* {\mu _{g,nm}}, \;
  \mathrm{Var}\left[ {f_n^* {g_{nm}}} \right] = \sigma _{n,m}^2 = \sigma _{f,n}^2\sigma _{g,n}^2 + \mu _{f,n}^2\sigma _{f,n}^2 + \mu _{g,nm}^2\sigma _{g,n}^2.
  \end{equation}
Based on the CLT, it can be shown that $T_m$, which is the sum of $N$ independently distributed random variables ${f_n^* {g_{nm}}}$, follows the complex Gaussian distribution for $N \gg 1$ with the following mean and variance: $\mu_m = \sum\nolimits_{n = 1}^N {{\mu _{n,m}}{e^{j{\theta _n}}}}$ and $\sigma _m^2 = \sum\nolimits_{n = 1}^N {\sigma _{n,m}^2}$. Then it can be decomposed into real part and imaginary part, which are both independent Gaussian random variables with the same variance $\frac{1}{2}\sigma_m^2 $. Therefore, $T_m^2 = {\operatorname{Re} {{\left\{ {{T_m}} \right\}}^2} + \operatorname{Im} {{\left\{ {{T_m}} \right\}}^2}}$ is distributed as a non-central chi-square distribution with $2$ degrees of freedom, whose non-centrality parameter is $\lambda_m  = \mu_m^2$. Then, the mean of $T_m^2$ is $\mathbb{E}[T_m^2]={\sigma_m ^2} + \lambda_m$. According to the additivity of chi-square distribution, we have $Y = \sum\nolimits_{m = 1}^M {T_m^2}$ is distributed as a non-central chi-square distribution with $2M$ degrees of freedom, whose non-centrality parameter is $\lambda  = \sum\nolimits_{m = 1}^M {\lambda_m}$. Therefore, the mean of $Y$ is $\mathbb{E}\left[ Y \right] = \sum\nolimits_{m = 1}^M {\mathbb{E}[T_m^2]} = \sum\nolimits_{m = 1}^M {{\sigma_m ^2} + \lambda_m} $.
This completes the proof of Lemma \ref{Lem3}.

  \bibliographystyle{IEEEtran} 
  \bibliography{reference}

\end{document}